\documentclass[runningheads]{llncs}
\input{preamble.sty}
\begin{document}

\title{Quiescence in Concert:\\ Composing Multi-Channel Time-Outs for IOCO}

% \titlerunning{Multiple Times for Quiescence}

\ifanonymous
\author{%
  Anonymous Author(s)\inst{1,2}
}
\authorrunning{Anonymous et al.}
\institute{%
  Institution 1, Country\\
  \email{anonymous@example.com}
  \and
  Institution 2, Country\\
  \email{anonymous@example.com}
}
\else
\author{%
  Laura Brand\'an Briones\inst{2} \orcidlink{0009-0001-7402-0077} \and Petra van den Bos\inst{1} \orcidlink{0000-0002-9212-1525} \and Marcus Gerhold\inst{1}\orcidlink{0000-0002-2655-9617}
}

%\authorrunning{L. Brand\'an Briones et al.}

\institute{%
  Formal Methods and Tools, University of Twente, The Netherlands\\
  \email{\{m.gerhold, p.vandenbos\}@utwente.nl}
  \and
  FaMAF, Universidad Nacional de C\'ordoba, Argentina\\
  \email{laura.brandan@unc.edu.ar}
}
\fi
\maketitle

%% =====================================================================
\begin{abstract}
In Model-Based Testing (MBT), test suites are generated automatically from a formal specification. The theory of testing real-time systems is rich, but often underused in practice, partly because applying the timed machinery demands expertise practitioners should not need. In prior work we addressed this for timed testing with a canonic lifting operator, which lets a modeller specify behaviour as  plain labelled transition systems while implicitly obtaining the timed automata that express their quiescent behaviour (the explicit absence of outputs) with timers. 
This paper takes the next step: we show that our lifting still works when each component carries its own time-out on its own channel. This way, we introduce a multi-channel lifting. We show that it commutes with parallel composition, i.e. composing and lifting is the same as lifting first and then composing. We show that the MBT apparatus survives: conformance, test generation and verdicts are preserved by the lifting, on the \emph{testable} traces that a time-out based tester can observe.

\keywords{Model-based testing \and Quiescence \and Timed automata
  \and Multi-channel systems \and Parallel composition \and
  Compositionality.}
\end{abstract}

%% =====================================================================
\section{Introduction}
\label{sec:intro}

Model-based testing (MBT) derives test cases automatically from a behavioural specification model and executes them against a system under test. The \emph{input--output conformance} relation \ioco~\cite{T08} is the de-facto standard correctness criterion for systems exhibiting nondeterministic behaviour. A distinctive feature of \ioco\ is its explicit treatment of \emph{quiescence}, i.e. the specified absence of outputs, commonly labelled $\delta$. 
In practice, testers detect quiescence through a time-out: if no output is observed within some finite time, the system is deemed quiescent. Several timed variants of \ioco\ reflect this, notably \tioco~\cite{LMN04}, \textbf{rtioco}~\cite{KT04} and the quiescence-focused $\tioco_M$~\cite{BBB04}.
In prior work~\cite{BvdBGS25-not-blind} we showed that these timed models need not be built by hand. Instead, the modeller specifies the system as a plain Labeled Transition System (LTS), and a canonic lifting~$\chi^M$ supplies the timing automatically by adding a clock and a global time-out~$M$. This produces a timed automaton and turns quiescence into a real time-out observation, as is commonly done in practice albeit without the formal underpinning. The lifting~$\chi^M$ comes with strong guarantees, as it preserves \ioco, commutes with test generation, and preserves test verdicts. In short, the modeller keeps working untimed and obtains the entire timed testing machinery for free.

This lifting, however, provides a \emph{single} global time-out, and real systems rarely admit only one.
To illustrate, suppose two components have been specified independently, each with its own natural time-out~$M_1$ and~$M_2$, and suppose $M_1 \ll M_2$. Running the composed system under the single-$M$ lifting of~\cite{BvdBGS25-not-blind} forces a choice. Taking $M = \min(M_1, M_2)$ declares the slower channel quiescent prematurely and so accepts implementations the specification rejects. Taking $M = \max(M_1, M_2)$ is sound but pointless, since every quiescence verdict on the faster channel is delayed until the slower channel's deadline. This inflates test-execution time needlessly. Even worse, any single intermediate $M$ inherits both downsides and collapses the two components' quiescence labels $\delta_1, \delta_2$ into a single~$\delta$, so the tester loses the ability to even tell which channel has gone silent. 

This motivates our current work: a multi-channel refinement of the lifting. It combines two existing pieces of work. The first is \emph{untimed} multi-channel \ioco~\cite{BHT98,H98}, where \emph{outputs} are partitioned into channels, each with its own quiescence, yielding the relation \mioco. The second is its \emph{timed} multi-channel counterpart $\mtioco{\Mdec}$ of Brand\'an Briones and Brinksma~\cite{BBB05}, which equips each channel with its own time-out.
Our paper provides the \emph{bridge} between them: a canonic lifting that lets the modeller continue to work on an untimed model and obtain the multi-channel timed theory for free. 
Moreover, this lifting is \emph{compositional}, meaning lifting the parallel composition of several components equals the parallel composition of the lifted components. A modeller can therefore build and time each component independently. 

\paragraph{Contributions.}
Concretely, our contributions are as follows.
\begin{itemize}
  \item[\(\circ \)] We introduce the multi-channel canonic lifting $\chidec$, which augments a labelled transition system with one clock and one quiescence time-out \emph{per output channel} (\Cref{sec:lifting}), generalising the single-time-out lifting of~\cite{BvdBGS25-not-blind};
  \item[\(\circ \)] We show that $\chidec$ bridges untimed multi-channel \ioco\ with the timed one $\ \mtioco{\Mdec}\ $ (\Cref{thm:preservation}):
    $\sys_I\:\ \tmioco\ \sys_S$ iff
    $\chidec(\sys_I)\:\ \mtioco{\Mdec}\ \chidec(\sys_S)$;
  \item[\(\circ \)] We show that $\chidec$ commutes with test-case generation and preserves test verdicts (\Cref{sec:tests});
  \item[\(\circ \)] We prove that $\chidec$ commutes with (shared-environment) parallel composition (\Cref{thm:compose}), i.e. $\chi^{\Mcat}(\sys_1\!\parop\!\sys_2)\!=\!\chi^{\Mvecone}(\sys_1)\!\parop\!\chi^{\Mvectwo}(\sys_2)$.
\end{itemize}

The extension is not a trivial generalisation of~\cite{BvdBGS25-not-blind}, since per-channel bounds may order quiescence observations in a suspension trace in a way no time-out based tester can produce in practice. To address the correspondence between the untimed and timed theories we define \emph{testable} traces (\Cref{def:TStraces}).
\paragraph{Paper overview.}
\Cref{sec:prelim} recalls labelled transition systems, and \Cref{sec:prelim:channels} fixes a multi-channel version of \ioco\ along the lines of~\cite{BHT98}.
\Cref{sec:timed} recalls timed automata and the timed multi-channel relation $\mtioco{\Mdec}$ along the lines of~\cite{BBB05}.
\Cref{sec:lifting} defines the lifting $\chidec$ and proves the conformance bridge.
\Cref{sec:tests} treats test generation and commutation.
\Cref{sec:compose} develops the compositionality theorem and the corresponding conformance corollary.
\Cref{sec:related,sec:conclusion} discuss related work and conclude.
This technical report includes proofs in \hyperlink{app:proofs-target}{Appendix~\ref*{app:proofs}}. 
\section{Labelled Transition Systems} 
\label{sec:prelim}

Labelled transition systems have transitions labelled with actions.
We fix a finite set of \emph{input actions} $\ActI$ and a finite set of \emph{output actions} $\ActO$ and write $\Act = \ActI \cup \ActO$. 
Inputs are suffixed with $?$, outputs with $!$, which are conventions on the naming of labels, not part of the labels themselves.
In LTSs, $\tau$ is often used to mark internal and invisible progress.
For the sake of simplicity, we exclude $\tau$-actions for now, though including it is not expected to cause any issues~\cite{BvdBGS26-not-blind}.

\begin{definition}[Labelled transition system]
\label{def:lts}
A \emph{labelled transition system} (LTS) is a tuple $\system = \ltstuple$, where $S$ is a finite set of states with unique initial state $s_0 \in S$ and $\to \subseteq S \times \Act \times S$ is the transition relation.
\begin{itemize}
    \item We write $\state \xrightarrow{\action} \state'$ for $(\state, \action, \state') \in \to$, and $\state\xrightarrow{\action}$ if $\state \xrightarrow{\action} \state'$ for some $\state'\in\states$ and $\state\not\!\!\!{\xrightarrow{\action}}$ if no such $\state'$ exists;
    \item For $\trace = \action_1 \cdots \action_n$ with $\action_i \in \Act$, we write $\state \xrightarrow{\trace} \state'$ if there exist states $\state_0, \state_1, \ldots, \state_n$ with $\state_0 = \state$, $\state_n = \state'$, and $\state_{i-1} \xrightarrow{\action_i} \state_i$
for all $1 \le i \le n$; we call $\trace \in \Act^*$ a \emph{trace}; 
    \item We write $\traces(\state)=\{\sigma\in\Act^*\mid \state\xrightarrow{\sigma}\}$  and set $\traces(\sys) = \traces(s_0)$.
\end{itemize}
\end{definition}

In \ioco\ theory the implementation model is assumed to be \emph{input-enabled}. The intuition is that a tester is always able to provide any input to the system at any given time. In our paper we make this distinction explicit by calling input-enabled systems \emph{input-output transition systems (IOTSs)}.

\begin{definition}[IOTS]
An \emph{input-output transition system} (IOTS) is an LTS $\system = \IOTStupla$ that is input-enabled, i.e.: 
$\forall\, s \in S: \forall\, i? \in \ActI: \state \transition{\inputlab?}.$
\end{definition}

\section{Multi-channel ioco}
\label{sec:prelim:channels}

Following~\cite{BHT98}, we partition the output alphabet into a fixed number of \emph{channels}. We will see later that each channel represents a component in a composed system.

\begin{definition}[Output channels]
\label{def:channels}
Let $n \geq 1$ for an $n\in\mathbb{N}$. An \emph{$n$-channel output partition} is a family $\{\ActO^{k}\}_{k=1}^{n}$ with $\ActO = \biguplus_{k=1}^{n} \ActO^{k}$, i.e. the channels cover $\ActO$ and are pairwise disjoint. We write $\Act^{k} = \ActI \cup \ActO^{k}$. The \emph{channel} of an output $o! \in \ActO$ is the unique $k$ with
$o! \in \ActO^{k}$.
\end{definition}

While Heerink~\cite{H98} partitions both inputs and outputs into channels, we partition \emph{only outputs} and keep inputs as a single shared set. This reflects the intuition that quiescence is an output observation: the tester controls when inputs are provided, but only observes when outputs are (resp.\ are not) emitted. Consequently, we later assume that inputs are shared among all components.

\begin{definition}[Per-channel quiescence]
\label{def:quiescence}
Let $\system$ be an LTS with an $n$-channel output partition. A state $\state \in \states$ is \emph{$\kchan$-quiescent} if there is no outgoing output transition in $\ActO^{\kchan}$ from $s$, i.e. $\forall\, o! \in \ActO^{\kchan}: s \ntransition{o!}$.
We introduce a fresh \emph{channel-indexed quiescence action}
$\delta_{k}$ for each channel $k \in \{1,\dots,n\}$ and write
\[
  \Act^\delta_O = \Act_O \cup \{\delta_1, \dots, \delta_n\},
  \qquad
  \Act^\delta   = \Act   \cup \{\delta_1, \dots, \delta_n\}.
\]
\end{definition}

Naturally, a state is quiescent in the classical sense~\cite{T08} iff it is $\kchan$-quiescent for every $\kchan$. We augment traces to include the $k$-quiescence labels $(\delta_1,\ldots,\delta_k)$ and call the resulting set suspension traces.
We follow our previous work~\cite{BvdBGS25-not-blind} closely to introduce more notation needed for multi-channel ioco.

\begin{definition}[Multi-channel ioco notation]
\label{def:mioco-notation}
Let $\system = \ltstuple$ be an LTS with an $n$-channel output partition, and $\state \in \states$. Then:
\begin{itemize}
  \item The \emph{channel-$\kchan$ outputs} of $\state$ are
  \begin{center}
    $\outop_{\kchan}(s) = \{o! \in \ActO^{\kchan} \mid s \transition{o!}\} \cup \{\delta_{\kchan} \mid s \text{ is $\kchan$-quiescent}\}$    
  \end{center}
  \item Given the empty sequence $\varepsilon$, $\action\in\Actd$ and $\sigma\in(\Actd)^*$ the \emph{states-after-trace} relation, extended from single actions $a$ to sequences $\sigma$, is:
    \begin{align*}
      s \afterop \varepsilon &= \{s\} \\
      s \afterop a           &=
        \{s' \in S\mid a \in \Act,\ s \transition{a} s'\}
        \cup \{s \mid a = \delta_{\kchan},\ s \text{ is $\kchan$-quiescent}\}\\
      s \afterop \action\;\sigma & = \bigcup\{\state'\afterop\sigma\mid\state'\in\state\afterop\action\}
    \end{align*} 
  \item The \emph{suspension traces} (i.e. traces explicitly including the $\delta_k$ labels) of $\state$ are
    $\Straces(\state) = \{\sigma \in (\Actd)^{\ast} \mid \state \afterop \; \sigma \neq \emptyset\}$,
    and $\Straces(\system) = \Straces(\state_0)$.
\end{itemize}
\end{definition}
This enables us to define multi-channel \ioco\, where the intuition is straightforward: rather than having one output channel there are multiple. Note that, unlike $\outop_k$ the operator $\afterop$ is channel-agnostic.
\begin{definition}[Multi-channel ioco]
\label{def:mioco}
Let $\sys_S$ be an LTS and $\sys_I$ an IOTS over the same set of inputs $\ActI$ and the same $n$-channel output 
partition. Then $\sys_I \;\mioco\; \sys_S$ iff
\[
  \forall\, \sigma \in \Straces(\sys_S): \forall\, \kchan \in \{1, \dots, n\}:
  \outop_{\kchan}(\sys_I \afterop \sigma)
  \subseteq
  \outop_{\kchan}(\sys_S \afterop \sigma).
\]
\end{definition}

\Cref{def:mioco} collapses to classical \ioco{} of~\cite{T08} when
$n = 1$, i.e. there is a single output channel and a single quiescence action $\delta = \delta_1$. The channel partition originates from~\cite{BHT98} and was later refined in~\cite{H98}; the per-channel quiescence labels $\delta_{\kchan}$ were later added in the timed setting of~\cite{BBB05}. Therefore, \Cref{def:mioco} can be considered the untimed restriction of the relation $\mtioco{\Mdec}$ of~\cite{BBB05}. 

\begin{example}[Display component]
\label{ex:display}
\Cref{fig:disp-lts} shows the UI component $\sys_{\textsc{UI}}$ of an ATM, a single-channel LTS with inputs $\ActI = \{\mathit{card?}, \mathit{pin?}\}$ and outputs $\ActO^{1} = \{\mathit{msg!}, \mathit{err!}\}$ holding status messages and error reports. It is an LTS but not an IOTS, since neither of the states accepts \emph{all} inputs. \Cref{fig:disp-lts}(b) adds the per-channel quiescence loops, i.e. $s_0$ and $s_2$ are 1-quiescent and have a $\delta_1$ loop, whereas $s_1$ and $s_3$ have an enabled output and no $\delta_1$-loop. In \Cref{sec:compose} we compose $\sys_{\textsc{UI}}$ with a cash dispenser to obtain a two-channel system.
\end{example}
\begin{figure}[tbp]
  \centering
  \begin{subfigure}[b]{0.48\linewidth}
    \centering
    \begin{tikzpicture}[
      >=Stealth, shorten >=1pt, auto, node distance=1.9cm and 1.9cm,
      on grid, semithick,
      every state/.style={draw, circle, minimum size=5mm, inner sep=1pt,
        fill=blue!15, draw=blue!50!black},
    ]
      \node[state, initial, initial text=] (e0) {$s_0$};
      \node[state] (e1) [right=of e0] {$s_1$};
      \node[state] (e2) [below=of e1] {$s_2$};
      \node[state] (e3) [below=of e0] {$s_3$};
      \path[->]
        (e0) edge node {$\mathit{card?}$} (e1)
        (e1) edge node {$\mathit{msg!}$}  (e2)
        (e2) edge node {$\mathit{pin?}$}  (e3)
        (e3) edge [bend left=30] node {$\mathit{msg!}$} (e0)
        (e3) edge [bend right=30] node[swap] {$\mathit{err!}$} (e0);
    \end{tikzpicture}
    \caption{$\sys_{\textsc{UI}}$ as an LTS.}
    \label{fig:disp-lts-a}
  \end{subfigure}
  \hfill
  \begin{subfigure}[b]{0.48\linewidth}
    \centering
    \begin{tikzpicture}[
      >=Stealth, shorten >=1pt, auto, node distance=1.9cm and 1.9cm,
      on grid, semithick,
      every state/.style={draw, circle, minimum size=5mm, inner sep=1pt,
        fill=blue!15, draw=blue!50!black},
      delta/.style={dashed, draw=red!80},
    ]
      \node[state, initial, initial text=] (e0) {$s_0$};
      \node[state] (e1) [right=of e0] {$s_1$};
      \node[state] (e2) [below=of e1] {$s_2$};
      \node[state] (e3) [below=of e0] {$s_3$};
      \path[->]
        (e0) edge node {$\mathit{card?}$} (e1)
        (e1) edge node {$\mathit{msg!}$}  (e2)
        (e2) edge node {$\mathit{pin?}$}  (e3)
        (e3) edge [bend left=30] node {$\mathit{msg!}$} (e0)
        (e3) edge [bend right=30] node[swap] {$\mathit{err!}$} (e0);
      \path[->, delta]
        (e0) edge [loop above] node {$\delta_1$} (e0)
        (e2) edge [loop right] node {$\delta_1$} (e2);
    \end{tikzpicture}
    \caption{$\sys_{\textsc{UI}}$ with 1-quiescence.}
    \label{fig:disp-lts-b}
  \end{subfigure}
  \caption{The LTS $\sys_{\textsc{UI}}$ over the single channel $\ActO^{1} = \{\mathit{msg!}, \mathit{err!}\}$. Figure (a) is the plain LTS. Figure (b) adds the $\delta_1$ self-loops at the 1-quiescent states $s_0$ and $s_2$. States $s_1$ and $s_3$ are not 1-quiescent since \emph{msg!} (\emph{err!} resp.) is enabled.} 
  \label{fig:disp-lts}
\end{figure}
\section{Timed Automata and Multi-Channel \texorpdfstring{$\mtioco{\Mdec}$}{mtioco}}
\label{sec:timed}

We assume that the reader is familiar with timed automata following Alur~\cite{A99} and only briefly recall what we need here. Let $C$ be a finite set of \emph{clocks} and $\Phi(C)$ the set of \emph{clock constraints} generated by the grammar:
$$\varphi \Coloneqq c \sim K \mid \varphi_1 \wedge \varphi_2 \text{ for } c \in C,
K \in \mathbb{R}_{\geq 0}, \mathord{\sim} \in \{<, \leq, =, \geq, >\}.$$
As per usual, a \emph{clock valuation} $v : C \to \RRnonzero$ assigns each clock its current value. %\LB{here}
\begin{definition}[Timed automaton]
A \emph{timed automaton} (TA) is a tuple
$\systemTA = \langle L, \Act, \Phi_L, C, \to, \ell_0\rangle$ where $L$ is a finite set of locations with initial location $\ell_0 \in L$, $\Phi_L : L \to \Phi(C)$ assigns an \emph{invariant} to each location, and $\to \subseteq L \times \Act \times \Phi(C) \times 2^{C} \times L$ is the transition relation.
A transition $\langle \ell, a, \varphi, \lambda, \ell'\rangle$ is \emph{enabled} when its guard $\varphi$ holds; when a transition is taken the clocks in $\lambda$ reset to zero. We exclude Zeno behaviour, i.e. infinite transitions in a finite amount of time.
\begin{itemize}
    \item We write $\ell \transition{(d,\action)} \ell'$ for $(d,\action) \in \RRnonzero \times \Act$ if there is $\langle \ell, \action, \phi, \lambda, \ell' \rangle \in {\transition{}}$ such that $\phi$ and $\Phi(\location)$ are true for time $d$ that is spent between $\ell$ and $\ell'$, and such that $\Phi(\ell')$ is true after updating the clocks with the resets from $\lambda$;
    \item We lift $\transition{}$ to sequences, i.e. for $\ttrace = (d_1, \action_1) \cdots (d_n, \action_n) \in (\RRnonzero \times \Act)^*$, we write $\ell \transition{\ttrace} \ell'$ if there are locations $\ell_0, \dots, \ell_n$ with $\ell_0 = \ell$, $\ell_n = \ell'$, and $\ell_{i-1} \transition{(d_i, \action_i)} \ell_i$ for all $1 \le i \le n$. As before, $\ell \transition{\ttrace}$ means $\ell \transition{\ttrace} \ell'$ for some $\ell'$;
    \item Timed traces are sequences of non-negative numbers and visible actions, i.e. $\ttraces(\ell)=\{\ttrace \in(\RRnonzero\times\Act)^* \mid \ell\transition{\ttrace}\}$.
\end{itemize}
\end{definition}

Like before, we require an implementation to be input-enabled. %\LB{here}
For a TA, a location $\ell$ is \emph{input-enabled} if every input is enabled from $\ell$ at any $d$, i.e.\ $\forall\, i? \in \ActI: \ell \xrightarrow{(d, i?)}$. An \emph{input-output timed automaton} (IOTA) requires this while every channel is still below its quiescence bound. To formally quantify quiescence and anchor it in real-time we fix a vector $\Mdec = [M_1, \dots, M_n] \in \mathbb{R}^n_{>0}$. These $M_k$ later serve as per-channel quiescence bounds.  

\begin{definition}[IOTA]
\label{def:iota}
TA $\systemTA$ is an \emph{input-output timed automaton} (IOTA) for $\Mdec = [M_1,\dots,M_n] \in \RRplus^{n}$ if every location $\ell \in L$ is input-enabled under every time $d$: $\forall\, i? \in \ActI: \forall\, \ell \in L: \forall\, d \in \Rnonneg: d < M_k :\ \ell \xrightarrow{(d, i?)}$.

\end{definition}

The strict inequality $d < M_k$ for \emph{all} channels $k$ is deliberate, i.e. once a channel has reached its bound it has to conclude quiescence.  
Since this changes the suspension context, input-enabledness is required only before that point.

We now lift the abstract notion of per-channel quiescence from the untimed setting to TAs.
Unlike input-enabledness, \kchan-quiescence checks the \emph{absence} of \kchan-channel output transitions enabled at $\ell$. 

\begin{definition}[$\kchan$-quiescent]
\label{def:k-quiescent}
Let $\systemTA$ be a TA with output partition %\LB{here}\MG{Not $< M_k$?}
$\{\ActO^{\kchan}\}_{k=1}^{n}$ and bounds $\Mdec = [M_1, \dots, M_n] \in \mathbb{R}^n_{>0}$. A location $\ell \in L$ is \emph{$\kchan$-quiescent} if: 
$
 \forall\, d \in \RRnonzero: d < M_\kchan: \forall\ o! \in \ActO^{\kchan}:
    \ell \hspace{0.7em}\not\hspace{-0.7em}\xrightarrow{(d, o!)}.
$
\end{definition}
Unlike input-enabledness, $\kchan$-quiescence checks the \emph{absence} of $\kchan$-channel output transitions enabled at $\ell$, which no clock valuation can affect. We therefore state it via a delay $d$ rather than a valuation.

Below we introduce some notations that are needed to define $\mtioco{\Mdec}$.

\begin{definition}[$\mtioco{\Mdec}$ notation]
\label{def:mtioco-obs}
We define:
\begin{itemize}
\item For each channel $\kchan$:%\LB{here} 
\begin{align*}
  \outop^{\Mdec}_{\kchan}(\ell) = &\{(d, o!) \in (\RRnonzero\!\times\!\ActO^{k}) \mid \ell \xrightarrow{(d, o!)}\} \cup\\      
  &\{(d, \delta_{\kchan}) \mid \ell \text{ is $\kchan$-quiescent} \wedge d =M_{\kchan}\}
\end{align*}

\item As in \Cref{def:mioco-notation} we extend (and overload) $\afterop$ as follows:
     \begin{align*}
        \ell \afterop \epsilon = & \{ \ell \} \\
        \ell \afterop (\tim,\action) = &  \{\ell' \mid \action \in \Act \wedge\ \ell\transition{(\tim,\action)} \ell'\}\ \cup \\
         & \{\ell \mid (\tim,\action) = (M_{\kchan},\ \delta_{\kchan}) \wedge \ell \text{ is $\kchan$-quiescent}\} \\ 
         \ell \afterop (\tim,\action)\ttrace = & \bigcup\ \{\ell' \after \ttrace \mid  \ell'\in \ell \after (\tim,\action)\}
    \end{align*}

    \item We define the suspension timed traces as the traces of location $\ell$, including $\delta_k$ at time $M_k$, 
    for $k$-quiescent locations encountered in the timed trace:
    
    \[
        \SttracesM(\ell) = \{\ttrace \in (\RRnonzero \times \Actd)^* \mid \ell \after \ttrace \neq \emptyset \}.
    \]
        
    \item We write:  $\systemTA \after\ttrace=\ell_0\after\ttrace$ and $\SttracesM(\systemTA) = \SttracesM(\ell_0)$.
    
\end{itemize}
\end{definition}

We can now define the timed, multi-channel conformance relation $\mtioco{\Mdec}$.

\begin{definition}[$\mtioco{\Mdec}$]
\label{def:mioco-timed}
Let $\sysTAS$ be a TA and $\sysTAI$ an IOTA over the same output
partition $\{\ActO^{\kchan}\}_{k=1}^{n}$ with $\Mdec= [M_1, \dots, M_n]$. Then, $\sysTAI\ \mtioco{\Mdec}\ \sysTAS$ iff
$\forall\ \rho \in \SttracesM(\sysTAS) \forall\  \kchan\in\{1, \ldots, n\}:
   \outop^{\Mdec}_{\kchan}(\systemTAI \afterop \rho)
   \subseteq \outop^{\Mdec}_{\kchan}(\systemTAS \afterop \rho).$
\end{definition}

\Cref{def:mioco-timed} matches the relation introduced in~\cite{BBB05} \emph{in spirit}, up to cosmetic changes that align the notation with our previous paper~\cite{BvdBGS25-not-blind}. As in the untimed case, and contrary to~\cite{BBB05}, we assume a \emph{single shared input channel}.
\section{The Canonic Multi-Channel Lifting $\chidec$}
\label{sec:lifting}

We now define the lifting operator $\chidec$ in \autoref{def:lifting} as multi-channel counterpart of our previous work in~\cite{BvdBGS25-not-blind}, with one clock per channel, as first outlined in~\cite{BBB04}.
The full TA formalism (cf. \cite{A99}) admits intricate constructions that we do not need. The lifting produces only a restricted, \emph{canonic TA} that has \emph{exactly one clock per output channel}, location invariants of the shape $\bigwedge_{k} c_k \le M_k$, guards of the form $\bigwedge_{k} c_k < M_k$ (inputs), $c_k < M_k$ ($\kchan$-outputs) or $c_k = M_k$ ($\kchan$-quiescence), and resets of the form $\{c_k\}$ ($\kchan$-outputs or $\kchan$-quiescence) or $C$ (all clocks, for inputs).
An output on channel $\kchan$ resets \emph{only} $c_{\kchan}$, since it is no evidence of activity on any other channel $\kchan' \neq \kchan$. Inputs, by contrast, reset \emph{all} clocks. Since an input is global tester activity, once the tester provides a stimulus, quiescence is measured freshly on every channel and all their timers are reset.
The invariant forces each channel to conclude quiescence at exactly its own bound, i.e. once $c_{\kchan}$ reaches $M_{\kchan}$, no further time may elapse until $\delta_{\kchan}$ is observed. 
We represent $\chidec$ graphically in \autoref{tab:lifting}.
\autoref{cor:iots-to-iota} states that IOTS map to IOTA, as expected.

\begin{definition}[Multi-channel lifting $\chidec$]
\label{def:lifting}
Let $\sys = \langle S, \Act, \to, s_0\rangle$ be an LTS with $n$-channel output partition $\{\ActO^{\kchan}\}_{k=1}^{n}$, and $\Mdec = [M_1, \dots, M_n] \in
\mathbb{R}_{>0}^{n}$. 
The \emph{canonic multi-channel TA} of $\sysTA$ is the result of the mapping $\chidec:\mathit{LTS}\rightarrow\mathit{TA}$ such that
\[
  \chidec(\sys) = \langle L, \Actd, \Phi_L,
    C, \to_{\sys}, \ell_0\rangle  \text{ where }
\]

\begin{itemize}
\item $L = S$, with $\ell_0 = s_0$;
  \item $C = \{c_1, \dots, c_n\}$, one clock per output channel;
  \item $\Phi_L(\ell) = (\bigwedge_{\kchan=1}^{n} c_{\kchan} \leq M_{\kchan})$, the location invariants;
    \item  $\to_{\sys}: L \times \Act \times \Phi(C) \times 2^C \times L$ defines transition relation $\to_{\sys}$ as an extension of $\rightarrow$ with clock constraints and resets, as follows: 
    
    \begin{tabular}{r l l}
         $ \to_{\sys}\ =$ & $\{(\location,\action,\bigwedge_{\kchan=1}^n c_{\kchan}<M_{\kchan},C,\location') \mid (\location,\action,\location') \in\  \rightarrow,\  \action \in \ActI \}\ \cup$ \\
         & $\{(\location,\action, c_{\kchan}<M_{\kchan},\{c_k\},\location') \mid (\location,\action,\location') \in\  \rightarrow,\  \action \in \ActO \}\ \cup$\\
         & $\{(\location, \delta_k,c_k = M_k,\{c_k\},\location) \mid \location \in L \text{ is $k$-quiescent}\}.$
    \end{tabular}
    \end{itemize}
\end{definition}

%MGChanged; 'old' table below
\begin{table}[t]
\caption{Representation of $\chimult{\Mvec}$ (cf.~\Cref{def:lifting}) with $n = 2$ channels. Inputs reset all clocks; an output on channel $\kchan$ resets only $c_{\kchan}$; each $k$-quiescent location adds a quiescence self-loop $\delta_{\kchan}$ with guard $M_{\kchan}$, independently of the other channel.}
\label{tab:lifting}
\centering
\resizebox{\textwidth}{!}{ 
\renewcommand{\arraystretch}{1}
\setlength{\tabcolsep}{4pt}
\begin{tabular}{|c|c|c|c|c|}
\hline
  & \textbf{Input $i?$} 
  & \textbf{Output ch.~k $o_k!$} 
  & \textbf{Quiescence $\delta_{\kchan}$} \\ \hline

  \multirow{1}{*}{\rotatebox{0}{\scriptsize{\textbf{LTS}}}}
  &
  \begin{tikzpicture}[baseline=-1mm]
    \node[draw, circle, minimum size=4mm, inner sep=0pt] (a) at (0,0) {};
    \node[draw, circle, minimum size=4mm, inner sep=0pt] (b) at (1.4,0) {};
    \draw[->, >=Stealth] (a) -- node[above, font=\footnotesize] {$i?$} (b);
  \end{tikzpicture}
  &
  \begin{tikzpicture}[baseline=-1mm]
    \node[draw, circle, minimum size=4mm, inner sep=0pt] (a) at (0,0) {};
    \node[draw, circle, minimum size=4mm, inner sep=0pt] (b) at (1.4,0) {};
    \draw[->, >=Stealth] (a) -- node[above, font=\footnotesize] {$o_k!$} (b);
  \end{tikzpicture}
  &
  \begin{tikzpicture}[baseline=-1mm]
    \node[draw, circle, minimum size=4mm, inner sep=0pt] (a) at (0,0) {};
    \draw[->, >=Stealth] (a) edge[loop right, looseness=8] node[right, font=\footnotesize] {$\delta_{\kchan}$} (a);
  \end{tikzpicture}
  \\ \hline

  \multirow{1}{*}{\rotatebox{0}{\scriptsize{\textbf{TA after} $\chimult{\Mvec}$}}}
  &
  \begin{tikzpicture}[baseline=-1mm]
    \node[draw, rounded corners, font=\scriptsize, align=center, inner sep=2pt] (a) at (0,0) {$c_1 \le M_1$\\$\tiny{\wedge}$\\$c_2 \le M_2$};
    \node (b) at (3.2,0) {};
    \draw[->, >=Stealth] (a) -- node[above, font=\scriptsize] {\shortstack{$i?$\\$c_1\!<\!M_1 \wedge c_2\!<\!M_2$}} node[below, font=\scriptsize] {$\{c_1, c_2\}$} (b);
  \end{tikzpicture}
  &
  \begin{tikzpicture}[baseline=-1mm]
    \node[draw, rounded corners, font=\scriptsize, align=center, inner sep=2pt] (a) at (0,0) {$c_1 \le M_1$\\$\tiny{\wedge}$\\$c_2 \le M_2$};
    \node (b) at (2.6,0) {};
    \draw[->, >=Stealth] (a) -- node[above, font=\scriptsize] {\shortstack{$o_k!$\\$c_k\!<\!M_k$}} node[below, font=\scriptsize] {$\{c_k\}$} (b);
  \end{tikzpicture}
  &
    \begin{tikzpicture}[baseline=-1mm]
     \node[draw, rounded corners, font=\scriptsize, align=center, inner sep=2pt] (a) at (0,0) {$c_1\!\le\!M_1$\\ $\tiny{\wedge}$\\ $c_2\!\le\!M_2$};
    \draw[->, >=Stealth] (a) edge[loop right, looseness=4] node[right, font=\scriptsize] {\shortstack[l]{$\delta_{\kchan}$\\$c_{\kchan}=M_{\kchan}$\\$\{c_{\kchan}\}$}} (a);
  \end{tikzpicture}
  \\\hline
\end{tabular}}
\end{table}

\begin{restatable}[Input-enabledness under $\chidec$]{corollary}{corIotsIota}
\label{cor:iots-to-iota}
$\chidec$ maps IOTSs to IOTAs.
\end{restatable}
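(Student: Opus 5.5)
We unfold Definitions~\ref{def:iota} and~\ref{def:lifting} and check the input-enabledness condition of an IOTA directly. Let $\sys=\langle S,\Act,\to,s_0\rangle$ be an IOTS with an $n$-channel output partition, and write $\chidec(\sys)=\langle L,\Actd,\Phi_L,C,\to_\sys,\ell_0\rangle$. We fix an arbitrary location $\ell\in L=S$, an input $i?\in\ActI$, and a delay $d$ with $d<M_k$ for every channel $k$, following the reading of the quantifier explained after Definition~\ref{def:iota}. The goal is a transition $\ell\xrightarrow{(d,i?)}\ell'$ for some $\ell'$.

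The first step uses input-enabledness of $\sys$. It yields some $s'$ with $\ell\xrightarrow{i?}s'$ in $\sys$. By the first clause of $\to_\sys$, there is then the timed transition $\langle \ell, i?, \bigwedge_{k}c_k<M_k, C, s'\rangle$. It remains to show that this transition can be taken after delay $d$.

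The second step checks the three conditions in the semantics of $\ell\xrightarrow{(d,i?)}\ell'$.
\begin{itemize}
\item The guard must hold: every clock valuation reached after delaying $d$ must satisfy $c_k<M_k$.
\item The invariant $\Phi_L(\ell)=\bigwedge_k c_k\le M_k$ must hold throughout the delay.
\item The invariant $\Phi_L(s')$ must hold after the reset. This is immediate, because all clocks are reset to $0$ and $0\le M_k$ since $M_k>0$.
\end{itemize}
For the first two conditions we argue with the clocks measured from the moment $\ell$ is entered. The paper does the same when it states $k$-quiescence and input-enabledness via a delay $d$ rather than a valuation. Under this reading, after delay $d$ each clock has value $d<M_k$, so both the guard and the invariant of $\ell$ hold along the whole interval $[0,d]$.

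The main obstacle is this implicit treatment of clock valuations. An output on channel $k'$ resets only $c_{k'}$, so the other clocks may already carry positive values when $\ell$ is entered, and then $c_k+d$ could reach $M_k$ even though $d<M_k$. I would handle this in one of two ways. The first is to note that Definition~\ref{def:iota} is phrased per location and per delay, like Definition~\ref{def:k-quiescent}, so the intended reading is relative to the moment of entering $\ell$; then the argument above suffices. The second, if a valuation-sensitive reading is required, is to restrict to reachable configurations where every $c_k<M_k$ and interpret $d$ as any delay keeping every clock below its bound. In that case the guard $\bigwedge_k c_k<M_k$ holds exactly on that region, and the invariant $\bigwedge_k c_k\le M_k$ is implied by it. Either way the input transition exists, so every location of $\chidec(\sys)$ is input-enabled and $\chidec(\sys)$ is an IOTA.
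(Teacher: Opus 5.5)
Your proposal is correct and takes essentially the paper's route: input-enabledness of the IOTS yields the untimed input transition, the lifting equips it with guard $\bigwedge_k c_k<M_k$ and reset $C$, and this guard holds exactly at the valuations where Definition~\ref{def:iota} demands input-enabledness. The paper argues directly in your second, valuation-based reading, and your explicit check of the invariants before and after the reset is a useful addition. Your first reading assumes all clocks start at $0$ on entering $\ell$; it is unnecessary and, as you note yourself, not justified after an output that resets only its own clock.
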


\begin{example}[Lifting the UI]
\label{ex:display-lifted}
We apply $\chidec$ to $\sys_{\textsc{ui}}$ with the single bound $M_1 = 1$, indicating that the user interface must conclude quiescence within one time unit. \Cref{fig:disp-ta} shows the result. Clock $c_1$ tracks UI quiescence and every location carries the invariant $c_1 \le 1$. Each output ($\mathit{msg!}, \mathit{err!}$) is guarded by $c_1 < 1$ and resets $c_1$, while the $\delta_1$ self-loops at $s_0$ and $s_2$ are enabled at exactly $c_1 = 1$ and reset $c_1$. Inputs reset $c_1$ as well indicating that the tester provided a stimulus. This is the single-channel construction of our prior work~\cite{BvdBGS25-not-blind}; the multi-channel structure is shown once we compose in~\Cref{sec:compose}.
\end{example}

%MGChanged, revert back to old below
\begin{figure}[t]
  \centering
  \begin{tikzpicture}[
    >=Stealth, shorten >=1pt, auto, node distance=2cm and 2cm,
    on grid, semithick,
    every state/.style={draw, align=center, rectangle, align=center, rounded corners=5pt, minimum height=4mm, minimum width=10mm, inner sep=1pt,
      fill=blue!15, draw=blue!50!black},
    el/.style={font=\scriptsize, align=center, inner sep=1pt},
    delta/.style={dashed, draw=red!80, font=\scriptsize, align=center},
  ]
    \node[state, initial, initial text=] (e0) {$s_0$\\ $\scriptstyle c_1\le1$};
    \node[state] (e1) [right=of e0, xshift=1cm] {$s_1$\\ $\scriptstyle c_1\le1$};
    \node[state] (e2) [below=of e1] {$s_2$\\ $\scriptstyle c_1\le1$};
    \node[state] (e3) [below=of e0] {$s_3$\\ $\scriptstyle c_1\le1$};
    \path[->]
      (e0) edge node[el] {$\mathit{card?}$\\$c_1\!<\!1,\ \{c_1\}$} (e1)
      (e1) edge node[el] {$\mathit{msg!}$\\$c_1\!<\!1,\ \{c_1\}$}  (e2)
      (e2) edge node[el] {$\mathit{pin?}$\\$c_1\!<\!1,\ \{c_1\}$}  (e3)
      (e3) edge [bend left=32] node[el] {$\mathit{msg!}$\\$c_1\!<\!1,\ \{c_1\}$} (e0)
      (e3) edge [bend right=32] node[el, swap] {$\mathit{err!}$\\$c_1\!<\!1,\ \{c_1\}$} (e0);
    \path[->, delta]
      (e0) edge [loop left] node {$\delta_1,c_1=1,\{c_1\}$} (e0)
      (e2) edge [loop right] node {$\delta_1,c_1=1,\{c_1\}$} (e2);
  \end{tikzpicture}
  \caption{The lifted UI component $\chi^{[1]}(\sys_{\textsc{UI}})$ with bound $M_1 = 1$. Every location carries the invariant $c_1 \le 1$. Outputs are guarded $c_1 < 1$ and reset $c_1$; the $\delta_1$ loops are enabled at $c_1 = 1$ at the UI-quiescent locations $s_0$ and $s_2$.}
  \label{fig:disp-ta}
\end{figure}

\paragraph{Preservation of Conformance.}
The core of our work is that the untimed \mioco\ relation (cf. \Cref{def:mioco}) matches the timed multi-channel relation (cf.~\Cref{def:mioco-timed}) after lifting. In other words, the lifting preserves conformance. To formally prove that, we need a language-theoretic result that both systems (before and after the lifting) share the same traces up to timing information. 

Intuitively, $\chidec$ makes outputs timed, but does not change which ones are enabled in a state, i.e. a channel-$\kchan$ output at a state becomes the timed observations $\{(d, o!) : d < M_{\kchan}\}$, 
%MGchange
and $\kchan$-quiescence becomes the single observation $(M_{\kchan},\delta_{\kchan})$. 
Untimed and timed observations thus correspond per channel. 
In particular, for a timed trace $\rho$, its \emph{projection} $\rho{\downarrow}$ removes all delays, i.e. for the empty sequence $\varepsilon$ it is $\varepsilon{\downarrow} = \varepsilon$ and otherwise inductively $((d, \action)\cdot\rho'){\downarrow} = \action \cdot (\rho'{\downarrow})$. Intuitively, $\rho{\downarrow} = \sigma$ means $\rho$ and $\sigma$ agree on actions and ignore timing (\autoref{lem:canonic}).

\begin{restatable}[Multi-channel canonic traces]{lemma}{lemCanonic}
\label{lem:canonic}
Let $\sys = \langle S, \Act, \to, s_0\rangle$ be an LTS with
an $n$-channel output partition and $\Mdec = [M_1, \dots, M_n] \in \RRplus^{n}$, then:\\
  If $\rho \in \SttracesM(\chidec(\sys))$, then there is
        $\sigma \in \Straces(\sys)$ such that
        $\rho{\downarrow}=\sigma$.
\end{restatable}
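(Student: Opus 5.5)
We argue by induction on the length of $\rho$, proving a slightly stronger invariant that ties locations to states. Since $L = S$ and $\ell_0 = s_0$, we can identify locations of $\chidec(\sys)$ with states of $\sys$. The claim to establish is: for every $\rho \in (\RRnonzero\times\Actd)^*$ and every location $\ell$,
\[
  \chidec(\sys) \afterop \rho \;\subseteq\; \sys \afterop (\rho{\downarrow}).
\]
Here $\afterop$ on the left is the timed operator of \Cref{def:mtioco-obs} and on the right the untimed one of \Cref{def:mioco-notation}. The lemma then follows at once. If $\rho \in \SttracesM(\chidec(\sys))$, the left-hand side is nonempty, hence so is $\sys\afterop\rho{\downarrow}$, and so $\sigma := \rho{\downarrow} \in \Straces(\sys)$.

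\textbf{Base case.} For $\rho=\varepsilon$ both sides equal $\{s_0\}$.

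\textbf{Inductive step.} Write $\rho = \rho'\cdot(d,a)$ and use the recursive definitions of $\afterop$. By the induction hypothesis every $\ell \in \chidec(\sys)\afterop\rho'$ is a state in $\sys\afterop\rho'{\downarrow}$. It therefore suffices to show the single-step inclusion $\ell \afterop (d,a) \subseteq \ell \afterop a$ for every location $\ell$. Since $\afterop$ is defined by prepending, we will either run the induction on suffixes or first prove the easy unfolding $s\afterop \sigma a = \bigcup\{s'\afterop a \mid s'\in s\afterop\sigma\}$ for both operators.

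The single-step inclusion splits into two cases.
\begin{itemize}
\item \textbf{Case $a \in \Act$.} Then $\ell \transition{(d,a)} \ell'$ in the TA requires a TA edge $\langle \ell,a,\varphi,\lambda,\ell'\rangle \in \to_{\sys}$. By \Cref{def:lifting}, the edges labelled with inputs or outputs are exactly the LTS transitions $(\ell,a,\ell')\in\to$, decorated with a guard and a reset. Hence $\ell\transition{a}\ell'$ in $\sys$ and $\ell' \in \ell\afterop a$. The $\delta_k$ self-loops do not carry labels in $\Act$, so they cannot be used here.
\item \textbf{Case $a=\delta_k$.} The timed $\afterop$ yields $\ell$ only if $d=M_k$ and $\ell$ is $k$-quiescent in the TA sense of \Cref{def:k-quiescent}. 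We must show that $\ell$ is then $k$-quiescent in the LTS sense of \Cref{def:quiescence}.
\end{itemize}

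The main obstacle is this transfer of quiescence, i.e.\ the direction TA-quiescent $\Rightarrow$ LTS-quiescent. We argue by contraposition. Suppose $\ell\transition{o!}$ in $\sys$ for some $o!\in\ActO^k$. The lifted edge has guard $c_k<M_k$ and the invariant is $\bigwedge_j c_j\le M_j$. We take the step from a valuation with $c_k=0$ (e.g.\ the initial valuation) and a small delay $d<\min_j M_j$; under these choices the guard holds and the invariant is preserved. Hence $\ell \transition{(d,o!)}$ for some $d<M_k$, contradicting TA $k$-quiescence.

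There is a subtlety here: \Cref{def:k-quiescent} quantifies over delays without fixing a clock valuation, because the timed transition relation is stated location-wise. We read $\ell\transition{(d,o!)}$ as ``there exists an admissible valuation'', consistent with the remark after \Cref{def:k-quiescent} that quiescence is unaffected by valuations. With that reading the contraposition goes through, the inductive step closes, and the lemma follows.
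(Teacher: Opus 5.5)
Your proposal is correct and follows essentially the same route as the paper: both project a suspension timed trace step by step, matching each input/output step to the LTS transition it decorates and each $\delta_k$ step to $k$-quiescence of the corresponding state. Your version is more rigorous in two places the paper only asserts. First, you state the invariant explicitly as the inclusion $\chidec(\sys)\afterop\rho \subseteq \sys\afterop(\rho{\downarrow})$. Second, your contraposition (a $k$-output is taken after a small delay from a valuation with $c_k=0$) shows that TA $k$-quiescence forces LTS $k$-quiescence, relying on the valuation-independent reading that the paper itself endorses.
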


In the other direction, we need to be more careful. 
An LTS may have some traces, that are not testable after lifting to the timed setting. Consider a trace $\delta_2 o_1$ for $M_1 < M_2$: 
it models that output $o_1$ from channel 1, that should only be allowed before time-out $M_1$, could happen after time-out $M_2$ (due to the preceding $\delta_2$), which contradicts that output $o_1$ is allowed (before $M_1 < M_2$).
The lifted TA excludes this behaviour by making bounds explicit.
\autoref{def:TStraces} restricts suspension traces to only the testable traces.
Concretely, a suspension trace is testable iff its quiescence observations respect the ordering of the bounds and no $\delta_k$ occurs while a faster channel $k'$ ($M_{k'} < M_k$) has been silent for at least $M_{k'}$ without $\delta_{k'}$ being observed. The trace $\delta_2 o_1$ above violates this and describes an observer who has skipped a timeout that necessarily happened but was not observed by any timeout-based tester.

\begin{definition}[Testable traces] \label{def:TStraces}
Let $\sys$ be an LTS with
an $n$-channel output partition and $\Mdec = [M_1, \dots, M_n] \in \RRplus^{n}$. The set of \emph{testables traces} are:
 $$\TStraces(\sys) = \{\sigma \in \Straces(\sys) \mid \exists \rho \in \SttracesM(\chidec(\sys)): \rho{\downarrow} = \sigma\}.$$
\end{definition}

\Cref{def:TStraces} identifies the suspension traces that survive the lifting. Restricting \mioco\ to these traces yields a relation that, by \Cref{lem:canonic}, matches $\mtioco{\Mdec}$ exactly. 
\autoref{def:tmioco} introduces the restricted relation $\tmioco$, using outputs $\outT$, that only omits a $\delta_{\kchan}$ from $\outop$ that no realization of $\sigma$ can observe, i.e. when $\sigma$ is not a testable trace.
\begin{definition}[Testable multi-ioco]
\label{def:tmioco}
For an LTS $\sys$, $\sigma\in\Straces(\sys)$ and channel $\kchan$, let
$\outT_{\kchan}(\sys,\sigma) = \{a \in \ActO^{\kchan}\cup\{\delta_{\kchan}\} \mid \sigma\cdot a \in \TStraces(\sys)\}$.
Then $\sys_I \;\tmioco\; \sys_S$ iff
$\forall \sigma \in \TStraces(\sys_S)\ \forall \kchan:\ \outT_{\kchan}(\sys_I,\sigma) \subseteq \outT_{\kchan}(\sys_S,\sigma)$.
\end{definition}
\begin{restatable}[Preservation]{theorem}{thmPreservation}
\label{thm:preservation}
Let $\sys_I$ be an IOTS and $\sys_S$ an LTS over the same $n$-channel output partition. For every $\Mdec = [M_1, \dots, M_n] \in \mathbb{R}_{>0}^{n}$:
\begin{center}
  $\sys_I \;\tmioco\; \sys_S $ \ if and only if $\ \chidec(\sys_I) \;\mtioco{\Mdec}\; \chidec(\sys_S)$.
\end{center}
\end{restatable}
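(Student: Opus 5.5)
The plan is to reduce both relations to statements about the same set of traces and then compare the output sets trace by trace. The key observation is that after lifting, the timed behaviour is \emph{determined} by the untimed one up to delays. By \Cref{lem:canonic} and \Cref{def:TStraces}, projection $\rho\mapsto\rho{\downarrow}$ maps $\SttracesM(\chidec(\sys))$ onto $\TStraces(\sys)$. I also want a location-level companion fact. For $\rho\in\SttracesM(\chidec(\sys))$ we have $\chidec(\sys)\afterop\rho\subseteq \sys\afterop\rho{\downarrow}$, since locations are states and every timed step is an untimed step. Conversely, a location in $\sys\afterop\rho{\downarrow}$ is reached by $\rho$ itself, because the guards and resets depend only on the action sequence and the delays. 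Proving this ``timing-independence'' claim is my first step. I would do it by induction on $\rho$, showing that the clock valuation after $\rho$ is a function of $\rho$ alone: inputs reset all clocks, $o!\in\ActO^k$ and $\delta_k$ reset $c_k$, and every other clock advances by the delay. Hence any untimed path realising $\rho{\downarrow}$ is enabled along $\rho$ exactly when the guards hold. Guard satisfaction depends only on the valuation, except for $\delta_k$, whose enabling depends on $k$-quiescence of the state. So $\chidec(\sys)\afterop\rho=\{s\in \sys\afterop\rho{\downarrow}\mid$ the path is timed-compatible$\}$, and since the constraints are uniform, this is either all of $\sys\afterop\rho{\downarrow}$ or empty.

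The second step relates output sets. For $\rho$ with $\rho{\downarrow}=\sigma$, I claim that $(d,a)\in\outop^{\Mdec}_k(\chidec(\sys)\afterop\rho)$ iff $\rho\cdot(d,a)\in\SttracesM(\chidec(\sys))$ with $a\in\ActO^k\cup\{\delta_k\}$. This follows directly from \Cref{def:mtioco-obs}. It then gives $a\in\outT_k(\sys,\sigma)$, because $(\rho\cdot(d,a)){\downarrow}=\sigma a$. Conversely, if $\sigma a\in\TStraces(\sys)$, there is some $\rho'$ with $\rho'{\downarrow}=\sigma a$. Whether a particular $\rho$ with $\rho{\downarrow}=\sigma$ extends by $a$ depends on its delays. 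I would therefore state the correspondence existentially over realisations: $\outT_k(\sys,\sigma)=\{a\mid \exists\rho,d:\rho{\downarrow}=\sigma,\ (d,a)\in\outop^{\Mdec}_k(\chidec(\sys)\afterop\rho)\}$.

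For the direction ($\Leftarrow$), assume timed conformance, and take $\sigma\in\TStraces(\sys_S)$ and $a\in\outT_k(\sys_I,\sigma)$. I pick $\rho'=\rho\cdot(d,a)$ realising $\sigma a$ in $\chidec(\sys_I)$. The issue is that $\rho$ must also lie in $\SttracesM(\chidec(\sys_S))$. The intended argument is that whether a timed action sequence is realisable depends only on the delays and on the untimed enabledness from step one, and $\sigma\in\Straces(\sys_S)$. The delicate part is the $\delta$-steps, which require quiescence in $\sys_S$. Here I use $\sigma\in\Straces(\sys_S)$ to supply a quiescent state on some path of $\sys_S$ for each $\delta$ in $\sigma$. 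Then $(d,a)\in\outop_k^{\Mdec}(\chidec(\sys_I)\afterop\rho)\subseteq\outop_k^{\Mdec}(\chidec(\sys_S)\afterop\rho)$ yields $\sigma a\in\TStraces(\sys_S)$.

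For the direction ($\Rightarrow$), take $\rho\in\SttracesM(\chidec(\sys_S))$ and $(d,a)$ in the implementation's output set. Then $\sigma=\rho{\downarrow}\in\TStraces(\sys_S)$, so $a\in\outT_k(\sys_I,\sigma)\subseteq\outT_k(\sys_S,\sigma)$. It remains to show that the \emph{same} $\rho$ extends in $\chidec(\sys_S)$ by $(d,a)$. By step one, the clock valuation after $\rho$ is identical in both lifted systems. So the guard and invariant conditions for $(d,a)$ coincide, and only untimed enabledness matters: either $a\in\outop_k(s)$ for some $s\in\sys_S\afterop\sigma$, or $a=\delta_k$ and some state is $k$-quiescent. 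I expect this transfer from ``some realisation extends'' to ``this realisation extends'' to be the main obstacle, in both directions. I would handle it by proving the stronger lemma that the realisability of $\rho\cdot(d,a)$ factors as (valuation-after-$\rho$ condition) $\wedge$ (untimed $\sigma a\in\Straces$). Here $\sigma a\in\Straces$ is understood together with the witnessing path's quiescence, which is exactly what the uniform clock discipline of \Cref{def:lifting} provides.
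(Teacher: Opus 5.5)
Your proposal is correct and follows essentially the paper's route. Your timing-independence step is the paper's ``uniformity of the lifting'', and your two directions match the paper's: for ($\Leftarrow$) you transfer the implementation's realisation $\rho$ to $\chidec(\sys_S)$, and for ($\Rightarrow$) you combine the shared valuation after $\rho$ with untimed enabledness in $\sys_S$.

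You are more careful than the paper at one point. The paper claims that every realisation of $\sigma$ extends to one of $\sigma\cdot a$ whenever $\sigma\cdot a$ is testable, which is false in general. For example, with $\Mdec=[1,2]$, a realisation of $\sigma=o_2!$ at delay $1$ leaves $c_1=1$ and cannot be extended by $o_1!$, while a realisation at delay $1/2$ can. You instead use only the existential correspondence and take the timing of $(d,a)$ from the implementation's own step, which is exactly what makes ($\Rightarrow$) go through.
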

\section{Test Cases and Commutation}
\label{sec:tests}
We define multi-channel test cases for LTSs and TAs, respectively. Similar to our previous work~\cite{BvdBGS25-not-blind} test cases are inspired from the literature~\cite{T08,vdBS18}. 
We further connect these test cases to conformance by showing that verdicts are preserved by the lifting. An implementation passes every untimed test case iff its lifting passes every timed test case. 

As usual in \ioco-theory, the tester's choices at each state (resp. location) are: (1) stop; (2) observe outputs or quiescence; or (3) stimulate with an input. 
Our test cases here range over the augmented alphabet $\Actd = \Act \cup \{\delta_1, \dots, \delta_n\}$; the \emph{multi-channel} aspect is precisely that quiescence is observed per channel, via the labels $\delta_1, \dots, \delta_n$, rather than through a single $\delta$.
\begin{definition}[Multi-channel LTS test case]
\label{def:lts-test}
A \emph{test} for an LTS $\sys_S$ with $n$-channel output partition and bounds $\Mdec$ is a tree-shaped LTS $\test = \langle \states^\test, \Actd, \trans^\test,
\startingstate^\test \rangle$ satisfying:
\begin{itemize}
  \item $\test$ uses the same action labels and partitioning as $\sys_S$ plus
        $\delta_1, \dots, \delta_n$;
  \item $\test$ has only finite traces, is deterministic, and has
        no cycles;
  \item There are two special states $\pass, \fail \in \states^\test$;
  \item States $\pass$ and $\fail$ have no outgoing transitions;
  \item Every other state except $\pass$ and $\fail$ enables all outputs $\ActO$, and either one input \emph{or} all $\delta_{\kchan}$, i.e. $\forall\ \state \in \states^\test \setminus \{\pass, \fail\}:$\\
        $(|\inp(\state)|\!=\!0 \wedge
          \out(\state)\!=\!\ActO \cup \{\delta_1, \dots, \delta_n\})
         \vee
         (\out(\state)\!=\!\ActO \wedge
          |\inp(\state)|\!=\!1)$;
          
  \item \emph{Input-specifiedness:} All traces of $\test$ that end
        with an input are testable traces of $\sys_S$, i.e.\\
        $\forall\ \trace \in \traces(\test), \forall\ \iaction? \in \ActI:
         \trace \cdot \iaction? \in \traces(\test)
         \Trans
         \trace \cdot \iaction? \in\TStraces(\sys_S)$;
  \item \emph{Soundness:} All traces of $\test$ leading to $\pass$ are testable traces of $\sys_S$:\\
        $\forall\ \trace \in \traces(\test):
         \test \transition{\trace} \pass
         \Trans
         \trace \in \TStraces(\sys_S)$;
  \item \emph{Correctness:} All traces of $\test$ that end with an
        output or any $\delta_{\kchan}$, and lead to $\fail$,
        are \emph{not} testable traces of $\sys_S$:\\
        $\forall\ \trace \in\traces(\test),
         \forall\ \oaction \in \ActO \cup \{\delta_1, \dots, \delta_n\}:
         \trace \cdot \oaction \in \traces(\test)\ \wedge\
         \test \transition{\trace \cdot \oaction} \fail
         \Trans
         \trace \cdot \oaction \notin \TStraces(\sys_S).$
\end{itemize}
\end{definition}
Test cases thus depend on $\Mdec$ through $\TStraces$. Note that correctness also uses $\TStraces$, i.e. a trace that is not testable cannot be observed by a time-out based tester. A natural refinement in the multi-channel setting would allow \emph{``per-channel observe states''}, i.e. states enabling $\ActO \cup \{\delta_{\kchan}\}$ for a single channel $\kchan$. Operationally this lets a test focus on one channel (e.g. waiting out the cash channel's time-out without branching on display-channel observations it does not care about), which can yield smaller, more targeted tests.
The trade-off is that ignoring one channel shifts the burden of detecting the other channels' faults onto other tests. Hence, coverage must be recovered across the test suite rather than within each test.

As in ~\cite{BvdBGS25-not-blind} test verdicts are defined on traces alone; we do not parallel-compose tests with implementations. The parallel composition we use here (cf. \Cref{def:par-lts}) composes system specifications, not implementation models and tests as sometimes done in \ioco\ literature~\cite{TSB11}.
Here, verdicts are defined in a  lightweight way: $\sys_I$ fails $\test$ iff some trace $\sigma \in \TStraces(\sys_I) \cap \traces(\test)$ leads to $\fail$, and passes otherwise. We lift this to a set of tests (a test suite) $\mathcal{T}$ and say $\sys_I$ \emph{passes} $\mathcal{T}$ iff it passes every $\test \in \mathcal{T}$, and \emph{fails} $\mathcal{T}$ iff it fails some $\test \in \mathcal{T}$.

\begin{definition}[Multi-channel TA test case]
\label{def:ta-test}
A \emph{timed test case} for a canonic TA $\sysTA = \chidec(\sys_S)$ with an $n$-channel output partition is a tree-shaped TA $\testTA = \langle L^{t_{\mathit{TA}}}, \Actd, \Phi_L^{t_{\mathit{TA}}}, C, \trans^{t_{\mathit{TA}}}, \ell_0^{t_{\mathit{TA}}} \rangle$ satisfying:
\begin{itemize}
  \item $\testTA$ uses the same action labels and partitioning as $\sys_S$ plus
        $\delta_1, \dots, \delta_n$;
  \item $\testTA$ has timed traces using a finite number of actions, and has no cycles;
  \item every transition of $\testTA$ is \emph{reachable}, i.e.\ it occurs in some timed trace $\ttrace \in \ttraces(\testTA)$;
  \item There are two special locations
        $\pass, \fail \in L^{t_{\mathit{TA}}}$;
  \item Locations $\pass$ and $\fail$ have no outgoing transitions,
  \item $\testTA$ uses the clock set $C = \{c_1, \dots, c_n\}$, one
        per output channel. Every non-terminal location carries the
        canonic invariant
        $\bigwedge_{\kchan} c_{\kchan} \le M_{\kchan}$;
  \item Every non-terminal location enables all outputs $\ActO$ and
        either one input \emph{or} all per-channel quiescence labels
        $\delta_1, \dots, \delta_n$, refined per channel as follows:
        \begin{itemize}
          \item Each $\delta_{\kchan}$-transition carries guard
                $c_{\kchan} = M_{\kchan}$ and reset $\{c_{\kchan}\}$;
          \item Each input transition carries guard
                $\bigwedge_{\kchan} c_{\kchan} < M_{\kchan}$ and
                reset $C$;
          \item Each channel-$\kchan$ output transition carries guard
                $c_{\kchan} < M_{\kchan}$ and reset $\{c_{\kchan}\}$;
        \end{itemize}
  \item \emph{Input-specifiedness:} All timed traces of $\testTA$ that end with an input are suspension timed traces of
        $\sysTA$, i.e.\\
        $\forall\ \ttrace \in \ttraces(\testTA)\ \forall\ \tim \in \Rnonneg \
         \forall\ \iaction? \in \ActI:$\\
        $\ttrace \cdot (\tim, \iaction?) \in \ttraces(\testTA)
         \Trans
         \ttrace \cdot (\tim, \iaction?) \in \SttracesM(\sysTA);$
  \item \emph{Soundness:} All timed traces of $\testTA$ leading to
        $\pass$ are suspension timed traces of $\sysTA$:
        $\forall\ \ttrace \in \ttraces(\testTA):
         \testTA \transition{\ttrace} \pass
         \Trans \ttrace \in \SttracesM(\sysTA);$
  \item \emph{Correctness:} All timed traces of $\testTA$ that end
        with an output or $\delta_{\kchan}$, and lead to
        $\fail$ are \emph{not} suspension timed traces of
        $\sysTA$:\\
        $\forall\ \ttrace \in \ttraces(\testTA)\ \forall\ \tim \in \Rnonneg\ \forall\ \oaction \in \ActO \cup \{\delta_1, \dots, \delta_n\}:$\\
        $\ttrace \cdot (\tim, \oaction) \in \ttraces(\testTA)\ \wedge\
         \testTA \transition{\ttrace \cdot (\tim, \oaction)} \fail
         \Trans
         \ttrace \cdot (\tim, \oaction) \notin \SttracesM(\sysTA).$
\end{itemize}
\end{definition}

With test cases defined on both sides of the lifting, the two natural properties to investigate are (1) commutativity of the lifting operator (i.e. whether lifted test cases for the LTS are the same as test cases for the lifted LTS), and (2) the verdict preservation under the lifting.
Applied to a test case, $\chidec$ decorates the existing transitions, including the $\delta_{\kchan}$-transitions already present, and adds no self-loops, so $\pass$ and $\fail$ stay terminal. Strictly, this is another operator; for the sake of brevity we do not spell it out explicitly here.
\begin{restatable}[Test correspondence]{theorem}{thmTestCorr}
\label{thm:test-corr}
Let $\mathcal{T}_{\LTS}(\sys_S)$ be the set of test cases for an LTS $\sys_S$ (\Cref{def:lts-test}) and $\mathcal{T}_{\TA}(\chidec(\sys_S))$ the set of timed test cases for $\chidec(\sys_S)$ (\Cref{def:ta-test}). Then:
\[
  \chidec(\mathcal{T}_{\LTS}(\sys_S)) = \mathcal{T}_{\TA}(\chidec(\sys_S)).
\]
\end{restatable}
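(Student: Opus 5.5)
We prove the set equality by double inclusion. The lifting on tests only decorates transitions, so it is injective and its inverse is the projection of a canonic TA to its underlying LTS. The key tool is a trace correspondence between an untimed test $\test$ and $\chidec(\test)$: every $\rho \in \ttraces(\chidec(\test))$ projects to some $\rho{\downarrow} \in \traces(\test)$, and every $\sigma \in \traces(\test)$ has some realisation $\rho$ with $\rho{\downarrow} = \sigma$. This realisation exists when $\sigma$ is testable, and in general when the $\delta$-ordering of $\sigma$ is consistent with $\Mdec$. We combine this with the fact that $\chidec$ is deterministic on both sides. Since $\test$ and $\chidec(\sys_S)$ share $L=S$ and identical guards and resets, along a given action sequence the clock valuations in $\chidec(\test)$ and in $\chidec(\sys_S)$ coincide. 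Hence for a timed trace $\rho$ of $\chidec(\test)$ we get $\rho\in\SttracesM(\chidec(\sys_S))$ iff $\rho{\downarrow}\in\traces(\sys_S\text{ with }\delta\text{-loops})$. By \Cref{lem:canonic} and \Cref{def:TStraces}, this holds iff $\rho{\downarrow}\in\TStraces(\sys_S)$. This observation, "timing along a trace is determined by the action sequence," is the step I expect to be the main obstacle. I would prove it by induction on the length of $\rho$, using that $\delta_k$ fires exactly at $c_k=M_k$, outputs at $c_k<M_k$, and inputs when all $c_k<M_k$. Both automata impose literally the same constraints on the same clocks.

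\textbf{($\subseteq$)} Take $\test\in\mathcal{T}_{\LTS}(\sys_S)$. The structural clauses of \Cref{def:ta-test} follow directly from \Cref{def:lifting} applied to a tree: no cycles, finitely many actions, clocks $C$, canonic invariants, and the prescribed guards and resets. The only delicate structural clause is reachability. A branch of $\test$ whose trace is not realisable in time, for example $\delta_2$ before $\delta_1$ when $M_1<M_2$, would yield an unreachable transition. I would handle this by taking $\chidec$ on tests to prune unreachable transitions, which is consistent with the remark that it is "another operator." Input-specifiedness and soundness then transfer as follows. If $\rho\cdot(d,i?)$ is a trace of $\chidec(\test)$, its projection ends in an input of $\test$, so by the untimed clause it lies in $\TStraces(\sys_S)$, and by the key observation $\rho\cdot(d,i?)\in\SttracesM(\chidec(\sys_S))$. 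Correctness transfers contrapositively. If $\rho\cdot(d,o)\in\SttracesM(\chidec(\sys_S))$, its projection is testable, so it cannot lead to $\fail$ in $\test$.

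\textbf{($\supseteq$)} Take a timed test $\testTA$. Because its guards, resets and invariants are fixed by \Cref{def:ta-test}, it equals $\chidec(\test)$ for its underlying LTS $\test$, provided $\test$ is deterministic. Determinism is forced by the tree shape together with the requirement that each location enables each label at most once in the canonic form. It remains to show $\test\in\mathcal{T}_{\LTS}(\sys_S)$. The alphabet conditions are immediate. For input-specifiedness, soundness and correctness, take $\sigma\in\traces(\test)$. By reachability of every transition, $\sigma$ has a timed realisation $\rho$ in $\testTA$. The timed clauses for $\rho$ together with the key observation and \Cref{lem:canonic} give the corresponding statements about $\sigma$ and $\TStraces(\sys_S)$. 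This completes both inclusions.
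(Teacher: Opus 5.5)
Your route is the paper's. Uniformity (the valuation after a timed trace is fixed by that trace, and $\chidec(\test)$ and $\chidec(\sys_S)$ impose the same canonic guards and invariants) carries input-specifiedness, soundness and correctness in both directions, and $\supseteq$ works by projecting and re-lifting. You are also right that reachability is the delicate clause. The paper handles it by claiming that the last step of every branch of $\chidec(\test)$ is realizable ``by uniformity of the lifting'', and your $\delta_2$-before-$\delta_1$ example refutes exactly that claim.

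\textbf{The gap is in your repair.} Pruning unreachable edges breaks the shape clause of \Cref{def:ta-test}: every non-terminal location must offer all outputs together with either one input or \emph{all} of $\delta_1,\dots,\delta_n$. Take $n=2$ and $M_1<M_2$, and let $\test$ have an observe root. Each output and $\delta_1$ lead to $\pass$ or $\fail$ according to whether that one-letter trace is testable, and $\delta_2$ leads to $\fail$. The trace $\delta_2$ is never testable when $M_1<M_2$, so this is a legal untimed test for every $\sys_S$. In the lifted root $c_1=c_2$, and the invariant $c_1\le M_1$ keeps $c_2<M_2$, so the $\delta_2$-edge never fires. Unpruned, $\chidec(\test)$ violates reachability. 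Pruned, its root offers all outputs and only $\delta_1$, which is neither an observe nor a stimulate location. Pruning can also delete output edges: for $M_1=M_2$, after $i?\,\delta_1$ the only reachable valuation is $c_1=0$, $c_2=M_2$, so at an observe location there no channel-2 output and no $\delta_1$ can fire.

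In fact no timed test case in the sense of \Cref{def:ta-test} has an observe root when $M_1<M_2$, whereas untimed tests can. So no lifting that keeps the root's edges makes $\subseteq$ true. The statement cannot be proved from the definitions as they stand, by your argument or the paper's, and one of the definitions must change.

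A repair compatible with the rest of your proposal is to weaken the reachability clause to input edges and edges into $\pass$. That is all the $\supseteq$ direction actually uses. Correctness of the projection needs no reachability: a witness for $\sigma\cdot o\in\TStraces(\sys_S)$ would, by uniformity, also be a run of $\testTA$ into $\fail$. With the weakened clause, the unpruned lifting satisfies it by input-specifiedness and soundness of $\test$, and your uniformity argument closes both inclusions.

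Separately, determinism of $\testTA$ is not part of \Cref{def:ta-test}, since a tree may have two equally labelled edges to different children. You need it as an explicit assumption, as the paper implicitly does.
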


\Cref{thm:test-corr} shows that the lifting is commutative with respect to test generation, but it does not yet say anything about what running those tests \emph{means}.
The next theorem closes that gap. That is, passing or failing a test in the untimed paradigm agrees with passing or failing the lifted test in the timed paradigm.

\begin{restatable}[Verdict preservation]{theorem}{thmVerdicts}
\label{thm:verdicts}
For every IOTS $\sys_I$ and LTS $\sys_S$ and every $\Mdec = (M_1, \dots, M_n)\in\RRplus^n$:
\begin{enumerate}
  \item If $\sys_I$ passes $\mathcal{T}_{\LTS}(\sys_S)$,
    then $\chi^{\Mdec}(\sys_I)$ passes
    $\mathcal{T}_{\TA}(\chi^{\Mdec}(\sys_S))$.
  \item If $\sys_I$ fails $\mathcal{T}_{\LTS}(\sys_S)$,
    then $\chi^{\Mdec}(\sys_I)$ fails
    $\mathcal{T}_{\TA}(\chi^{\Mdec}(\sys_S))$.
\end{enumerate}
\end{restatable}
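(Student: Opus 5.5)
We prove both items by transporting a witnessing trace through the lifting, using \Cref{thm:test-corr} to move between the two test suites and \Cref{lem:canonic} together with \Cref{def:TStraces} to move between traces. Timed verdicts are read in the same lightweight way as untimed ones: $\chidec(\sys_I)$ fails a timed test $\testTA$ iff some $\rho \in \SttracesM(\chidec(\sys_I)) \cap \ttraces(\testTA)$ leads to $\fail$. The central observation is a \emph{trace correspondence between a test and its lifting}. For a test $\test \in \mathcal{T}_{\LTS}(\sys_S)$, the lifting $\chidec(\test)$ only decorates transitions with the canonic guards and resets. Hence $\chidec(\test) \transition{\rho} \ell$ implies $\test \transition{\rho{\downarrow}} \ell$. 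We obtain this by applying the argument of \Cref{lem:canonic} to the test, now read as an LTS, noting that no new $\delta_k$-loops are added.

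For item~2, assume $\sys_I$ fails some $\test \in \mathcal{T}_{\LTS}(\sys_S)$. Then there is $\sigma \in \TStraces(\sys_I) \cap \traces(\test)$ with $\test \transition{\sigma} \fail$. By \Cref{def:TStraces} there is $\rho \in \SttracesM(\chidec(\sys_I))$ with $\rho{\downarrow} = \sigma$. By \Cref{thm:test-corr}, $\chidec(\test) \in \mathcal{T}_{\TA}(\chidec(\sys_S))$. We must show that this same $\rho$ is a timed trace of $\chidec(\test)$ reaching $\fail$. This is the main obstacle, because the guards of $\chidec(\test)$ and of $\chidec(\sys_I)$ must accept the same delays along $\rho$. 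The key step is that in every canonic TA the clock valuation after a timed trace depends only on the trace and not on the locations visited. Every input resets $C$, every channel-$k$ output and every $\delta_k$ resets $c_k$, and all lifted TAs share the same invariants $\bigwedge_k c_k\le M_k$. By induction on the length of $\rho$, each step $(d,a)$ that is enabled in $\chidec(\sys_I)$ with guard $g_a$ is also enabled in $\chidec(\test)$, since the test has an $a$-transition with the identical guard $g_a$ and reset (determinism of $\test$ makes this transition unique). So $\rho \in \ttraces(\chidec(\test))$, and it leads to the image of $\fail$. Hence $\chidec(\sys_I)$ fails $\mathcal{T}_{\TA}(\chidec(\sys_S))$.

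For item~1 we argue by contraposition. Suppose $\chidec(\sys_I)$ fails some $\testTA \in \mathcal{T}_{\TA}(\chidec(\sys_S))$ via $\rho$. By \Cref{thm:test-corr}, $\testTA = \chidec(\test)$ for some $\test \in \mathcal{T}_{\LTS}(\sys_S)$. Let $\sigma = \rho{\downarrow}$. By \Cref{lem:canonic}, $\sigma \in \Straces(\sys_I)$, and $\rho$ itself witnesses $\sigma \in \TStraces(\sys_I)$. By the trace-correspondence observation above, $\test \transition{\sigma} \fail$. So $\sys_I$ fails $\test$, and hence fails $\mathcal{T}_{\LTS}(\sys_S)$. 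Items~1 and~2 together then give full verdict agreement, which is consistent with \Cref{thm:preservation}.
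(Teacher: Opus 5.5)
Your proposal is correct and follows the paper's route. You use \Cref{thm:test-corr} to move between the two suites. For item~2 you replay the testable-trace witness $\rho$ in the lifted test, relying on the fact that the canonic guards, resets and invariants, and hence the reached clock valuation, depend only on the timed trace; the paper calls this uniformity of the lifting and invokes it in one line, and your induction just spells it out. For item~1 you project $\rho$ and apply \Cref{lem:canonic}, arguing by contraposition as the paper does.
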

\Cref{thm:test-corr,thm:verdicts} prove the testing-side of the lifting, i.e. the modeller may continue to construct test cases for the untimed specification, and the verdicts agree with those of the lifted timed tests against the lifted implementation. 
\section{Compositionality}
\label{sec:compose}

The central result of this section is that the multi-channel lifting $\chidec$ commutes with parallel composition. Up to clock renaming, lifting a composed specification yields exactly the composition of the lifted specifications of components. This means a practitioner can model each component with its own per-channel time-outs and compose them for testing without losing conformance.

We compose components that operate in a \emph{shared environment}, i.e. each component reacts to the same (external) inputs, while producing outputs on its own disjoint channels. Our composition therefore  synchronises components on shared inputs and interleaves \emph{every} other action. Specifically, we assume that no output of one serves as an input to another as is commonly done in the paradigm of Lynch and Tuttle~\cite{LT88}. Their's is a richer setting that would interact subtly with quiescence here, since a blocked synchronised output can make a composed state quiescent where a single component is not. 

\begin{definition}[Shared-environment parallel composition of LTSs]
\label{def:par-lts}
Let $\sys_i=\langle S_i, \Act_i, \rightarrow_i, s_{0,i} \rangle$ be LTSs for $i = 1,2$ with disjoint output alphabets 
$\ActOi{1} \cap \ActOi{2} = \emptyset$ 
and shared inputs 
$\ActI^{\mathrm{sync}} = \ActIi{1} = \ActIi{2}$. 
Their \emph{parallel composition} $\sys_1 \parop \sys_2$ consists of the set of states $S_1 \times S_2$, initial state $(s_{0,1}, s_{0,2})$, and transitions
\begin{align*}
  (s_1, s_2) &\transition{a} (s_1', s_2')
    && \text{if } a \in \ActI^{\mathrm{sync}},\
       s_1 \transition{a} s_1',\ s_2 \transition{a} s_2'; \\
  (s_1, s_2) &\transition{a} (s_1', s_2)
    && \text{if } a \in \ActOi{1},\ s_1 \transition{a} s_1';\\
    (s_1, s_2) &\transition{a} (s_1, s_2') && \text{if } a \in \ActOi{2},\ s_2 \transition{a} s_2'.
\end{align*}

Its output partition is the disjoint union of the components' partitions, so each channel retains its own identity and bound.
\end{definition}

In the same vein we adapt parallel composition of TAs~\cite{A99}, which adds disjoint clock sets and component-wise joined invariants. Additionally, synchronisation on shared actions joins guards and unifies resets. Given $\Mdec_1=[M_1,\ldots,M_{n_1}] \in \Rpos^{n_1}$ and $\Mdec_2=[M'_1,\ldots,M'_{n_2}] \in \Rpos^{n_2}$, we write $$\Mcat=[M_1,\ldots,M_{n_1},M'_{1},\ldots,M'_{n_2}] \in \Rpos^{n_1+n_2}.$$ 

\begin{definition}[Shared-environment parallel composition of TAs]
\label{def:par-ta} 
Let \sloppy{$\sysTA_i = \langle L_i, \Act_i, \Phi_{L_i}, C_i, \to_i, \ell_{0,i}\rangle$} be canonic TAs for $i=1,2$ with disjoint clock sets, and output sets $\ActOi{1} \cap \ActOi{2} = \emptyset$, and shared inputs $\ActI^{\mathrm{sync}}=\ActIi{1}=\ActIi{2}$. 
Then $\sysTA_1 \parop \sysTA_2$ has location set $L_1 \times L_2$, clock set $C_1 \cup C_2$, initial location $(\ell_{0,1}, \ell_{0,2})$, 
invariants $\Phi_{L_1}(\ell_1) \wedge \Phi_{L_2}(\ell_2)$ for $(\ell_1,\ell_2) \in L_1 \times L_2$, 
and transitions:
\begin{itemize}[labelwidth=0.2em, labelsep=0.3em, leftmargin=*]
  \item synchronised, for $a \in \ActI^{\mathrm{sync}}$:
        $\langle (\ell_1,\ell_2), a, \varphi_1\wedge\varphi_2,
        \lambda_1\cup\lambda_2, (\ell_1',\ell_2')\rangle$ whenever
        $\langle\ell_i, a, \varphi_i, \lambda_i, \ell_i'\rangle\in\,\to_i$;
%    \item interleaved, for $a \in \ActOi{1} \cup \{\delta_{\kchan} \mid \kchan \in \sysTA_1\}$:\\
%        $\langle (\ell_1,\ell_2), a, \varphi_1, \lambda_1, (\ell_1',\ell_2)\rangle$
%        whenever $\langle\ell_1, a, \varphi_1, \lambda_1, \ell_1'\rangle\in\,\to_1$;
%  \item interleaved, for $a \in \ActOi{2} \cup \{\delta_{\kchan} \mid \kchan \in \sysTA_2\}$:\\
%        $\langle (\ell_1,\ell_2), a, \varphi_2, \lambda_2,
%        (\ell_1,\ell_2')\rangle$ whenever
%        $\langle\ell_2, a, \varphi_2, \lambda_2, \ell_2'\rangle\in\,\to_2$.
    \item interleaved,$\forall a\!\in\! \ActOi{1}^{\delta_1}$:\!
        $\langle (\ell_1,\ell_2), a, \varphi_1, \lambda_1, (\ell_1',\ell_2)\rangle$ when $\langle\ell_1, a, \varphi_1, \lambda_1, \ell_1'\rangle\in\,\to_1$;
  \item interleaved,$\forall a\!\in\ActOi{2}^{\delta_2}$:$\langle (\ell_1,\ell_2), a, \varphi_2, \lambda_2,(\ell_1,\ell_2')\rangle$ when $\langle\ell_2, a, \varphi_2, \lambda_2, \ell_2'\rangle\!\in\,\to_2$.
\end{itemize}
\end{definition}

\begin{example}[Composing dispenser and display]
\label{ex:atm-compose}
The full ATM arises by composing $\sys_{\textsc{ui}}$ with a cash dispenser $\sys_{\textsc{disp}}$, a single-channel LTS over $\ActO^{2} = \{\mathit{money!}\}$ that accepts $\mathit{card?}$ then $\mathit{pin?}$ before dispensing, see~\Cref{fig:compose}(a). \Cref{fig:compose}(b) and (c) show the composition before and after the lifting. The components synchronise on the shared inputs $\mathit{card?}, \mathit{pin?}$ and have disjoint output channels with bounds $\Mcat=[1, 5]$.
In other words, the UI component concludes quiescence in one time unit whereas the dispenser in takes five. 
The composed state $(s_2, e_3)$ has \emph{no} quiescence loop, as both channels have an output enabled, whereas $(s_0, e_0)$ and $(s_1, e_2)$ are quiescent on both channels and carry $\delta_1$ and $\delta_2$ enabled \emph{at the different times}. 
\end{example}

\begin{figure}[t]
  \vspace{-0.0cm}
  \centering
  % ===== TOP ROW: (a) dispenser, full width, centred =====
  \begin{subfigure}[b]{\linewidth}
    \centering
    \begin{tikzpicture}[
      >=Stealth, shorten >=1pt, auto, node distance=3cm,
      on grid, semithick,
      every state/.style={draw, circle, minimum size=5mm, inner sep=1pt,
        fill=blue!15, draw=blue!50!black},
      delta/.style={dashed, draw=red!80},
    ]
      % drawn horizontally to keep the row short
      \node[state, initial, initial text=] (d0) {$e_0$};
      \node[state] (d1) [right=of d0] {$e_1$};
      \node[state] (d2) [right=of d1] {$e_2$};
      \path[->]
        (d0) edge node {$\mathit{card?}$} (d1)
        (d1) edge node {$\mathit{pin?}$}  (d2)
        (d2) edge [bend left=15] node{$\mathit{money!}$} (d0);
      \path[->, delta]
        (d0) edge [loop above] node {$\delta_2$} (d0)
        (d1) edge [loop above] node {$\delta_2$} (d1);
    \end{tikzpicture}
    \caption[justification=raggedright, singlelinecheck=false]{The cash dispenser $\sys_{\textsc{disp}}$ as an LTS}
    \label{fig:disp-cash-lts}
  \end{subfigure}

  %\vspace{1.2em}
  \par
  % ----- (b) composed lts -----
\begin{subfigure}[b]{0.36\linewidth}
    \centering
    \resizebox{1\textwidth}{!}{
    \begin{tikzpicture}[
      >=Stealth, shorten >=1pt, auto, node distance=2.2cm and 1.2cm,
      on grid, semithick,
      every state/.style={draw, rounded corners, minimum width=8mm,
        minimum height=7mm, inner sep=1pt, font=\scriptsize, align=center,
        fill=blue!15, draw=blue!50!black},
      el/.style={font=\scriptsize, inner sep=1pt},
      delta/.style={dashed, draw=red!80, font=\scriptsize},
    ]
      \node[state, initial, initial text=] (q0) {$s_0,e_0$};
      \node[state] (q1) [below=of q0]      {$s_1,e_1$};
      \node[state] (q2) [below=of q1]      {$s_1,e_2$};
      \node[state] (q3) [below=of q2]      {$s_2,e_3$};
      \node[state] (q4) [left=of q3, xshift=-1cm] {$s_2,e_0$};
      \node[state] (q5) [right=of q3, xshift=1cm] {$s_0,e_3$};
      \path[->]
        (q0) edge node[el] {$\mathit{card?}$} (q1)
        (q1) edge node[el] {$\mathit{msg!}$}  (q2)
        (q2) edge node[el] {$\mathit{pin?}$}  (q3)
        (q3) edge node[el] {$\mathit{msg!}/\mathit{err!}$} (q4)
        (q3) edge node[el,swap] {$\mathit{money!}$} (q5)
        (q4) edge [bend left=20, looseness=1, in=-220] node[el,swap] {$\mathit{money!}$} (q0)
        (q5) edge [bend right=20, looseness=1, in=220] node[el] {$\mathit{msg!}/\mathit{err!}$} (q0);
      % \path[->, delta]
      %   (q0) edge [loop right] node {$\delta_1,\delta_2$} (q0)
      %   (q1) edge [loop right] node {$\delta_1$} (q1)
      %   (q2) edge [loop right] node {$\delta_1,\delta_2$} (q2)
      %   (q4) edge [out=60, in=30, looseness=10] node {$\delta_2$} (q4)  
      %   (q5) edge [out=150, in=120, looseness=10] node {$\delta_1$} (q5);
    \end{tikzpicture}}
    \caption{$\sys_{\textsc{UI}} \parop \sys_{\textsc{disp}}$ as an LTS}
    \label{fig:compose-lts}
  \end{subfigure}
  \hfill
  % ----- (c) lifted composition -----
  \begin{subfigure}[b]{0.62\linewidth}
    \centering
    
    %\resizebox{1\textwidth}{!}{
    \begin{tikzpicture}[
      >=Stealth, shorten >=1pt, auto, node distance=1.9cm and 1.6cm,
      on grid, semithick, 
      every state/.style={rectangle, align=center, rounded corners=5pt, minimum width=9mm,
        minimum height=9mm, inner sep=1pt, font=\scriptsize, align=center,
        fill=blue!15, draw=blue!50!black},
      el/.style={font=\scriptsize, align=center, inner sep=1pt},
      delta/.style={dashed, draw=red!80, font=\scriptsize, align=center},
    ]
      \node[state, initial, initial text=] (q0) {$s_0,e_0$\\$\scriptscriptstyle c_1\leq1$\\ $\scriptscriptstyle \wedge c_2\leq5$};
      \node[state] (q1) [below=of q0]      {$s_1,e_1$\\$\scriptscriptstyle c_1\leq1$\\ $\scriptscriptstyle \wedge c_2\leq5$};
      \node[state] (q2) [below=of q1]      {$s_1,e_2$\\$\scriptscriptstyle c_1\leq1$\\ $\scriptscriptstyle \wedge c_2\leq5$};
      \node[state] (q3) [below=of q2]      {$s_2,e_3$\\$\scriptscriptstyle c_1\leq1$\\ $\scriptscriptstyle \wedge c_2\leq5$};
      \node[state] (q4) [left=of q3, , xshift=-7mm] {$s_2,e_0$\\$\scriptscriptstyle c_1\leq1$\\ $\scriptscriptstyle \wedge c_2\leq5$};
      \node[state] (q5) [right=of q3, xshift=7mm] {$s_0,e_3$\\$\scriptscriptstyle c_1\leq1$\\ $\scriptscriptstyle \wedge c_2\leq5$};
      \path[->]
        (q0) edge node[el,swap] {$\scriptscriptstyle\mathit{card?}$\\$\scriptscriptstyle c_1<1\wedge c_2<5$\\$\scriptscriptstyle\{c_1,c_2\}$} (q1)
        (q1) edge node[el,swap] {$\scriptscriptstyle\mathit{msg!}$\\$\scriptscriptstyle c_1<1,\{c_1\}$} (q2)
        (q2) edge node[el,swap] {$\scriptscriptstyle\mathit{pin?}$\\$\scriptscriptstyle c_1<1\wedge c_2<5$\\$\scriptscriptstyle\{c_1,c_2\}$} (q3)
        (q3) edge node[el, swap] {$\scriptscriptstyle\mathit{msg!}/\mathit{err!}$} node[el] {$\scriptscriptstyle c_1<1$}  node[el,outer sep = 2mm] {$\scriptscriptstyle  \{c_1\}$} (q4)
        (q3) edge node[el] {$\scriptscriptstyle\mathit{money!}$} node[el,swap] {$\scriptscriptstyle c_2<5$} node[el,outer sep = -5mm] {$\scriptscriptstyle  \{c_2\}$} (q5)
        (q4) edge [bend left=18, looseness=1, in=-220] node[el] {$\scriptscriptstyle\mathit{money!}$\\$\scriptscriptstyle c_2<5,\{c_2\}$} (q0)
        (q5) edge [bend right=18, looseness=1, in=220] node[el, swap] {$\scriptscriptstyle\mathit{msg!}/\mathit{err!}$\\$\scriptscriptstyle c_1<1,\{c_1\}$} (q0);
      \path[->, delta]
        (q0) edge [loop right] node {$\scriptscriptstyle\delta_1, c_1 =1,\{c_1\}$\\$\scriptscriptstyle\delta_2,c_2=5,\{c_2\}$} (q0)
        (q1) edge [loop right] node[below, yshift=-0.2cm] {$\scriptscriptstyle\delta_2 , c_2=5,$\\$\scriptscriptstyle\{c_2\}$} (q1)
        (q2) edge [loop right] node[below, yshift=-0.3cm] {$\scriptscriptstyle\delta_1,c_1=1,\{c_1\}$\\$\scriptscriptstyle\delta_2, c_2=5,\{c_2\}$} (q2)
        (q4) edge [out=150, in=120, looseness=10]  node[above,xshift=0.3cm] {$\scriptscriptstyle\delta_1 , c_1=1,$\\$\scriptscriptstyle\{c_1\}$} (q4)
        (q5) edge [out=60, in=30, looseness=10] node[above,xshift=-0.3cm] {$\scriptscriptstyle\delta_2 , c_2=5,$\\$\scriptscriptstyle\{c_2\}$} (q5);
    \end{tikzpicture}
    %}
    \caption{TA $\chi^{[1]}(\sys_{\textsc{UI}})\!\parop\!\chi^{[5]}(\sys_{\textsc{disp}})$, lifted and composed with $\Mdec=[1,5]$}
    \label{fig:compose-ta}
  \end{subfigure}
  \caption{Parallel composition of UI and dispenser (the display $\sys_{\textsc{ui}}$ is shown in \Cref{fig:disp-lts}.
  (\subref{fig:disp-cash-lts}) the cash dispenser $\sys_{\textsc{disp}}$; (\subref{fig:compose-lts}) the LTS composition over states $(s_i,e_j)$; (\subref{fig:compose-ta}) its lifting with one clock per sub-component, i.e. $\Mdec=[1,5]$.}
  \label{fig:compose}
\end{figure}

Parallel composition of multiple components is one of the main reasons for per-channel time-outs. A single global $M$ like in our previous work~\cite{BvdBGS25-not-blind} forces every component to ``share'' one quiescence deadline, so a fast component cannot conclude quiescence until the slowest one would. This increases the overall testing time. Per-channel clocks avoid this, because each channel concludes quiescence at its own bound. This keeps verdicts specific and avoids the time overhead of waiting out the slowest channel everywhere. We show that this structure survives composition, since the multi-channel lifting commutes with parallel composition.

\begin{restatable}[Compositionality]{theorem}{thmComp}
\label{thm:compose}
Let $\sys_1, \sys_2$ be LTSs with disjoint output alphabets and shared inputs, and let $\Mdec_1 \in \Rpos^{n_1}$,
$\Mdec_2 \in \Rpos^{n_2}$. Then
\[
  \chi^{\Mcat}(\sys_1 \parop \sys_2)
  \ =\
  \chi^{\Mdec_1}(\sys_1) \parop \chi^{\Mdec_2}(\sys_2).
\]
\end{restatable}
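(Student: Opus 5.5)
The plan is to prove the equality componentwise. Both sides are canonic TAs, so it suffices to show they agree on locations, initial location, clock set, invariants, and transition relation. Agreement is taken up to the obvious renaming of clocks: the $n_1+n_2$ clocks $c_1,\dots,c_{n_1+n_2}$ of the left-hand side correspond to the disjoint clock sets $C_1=\{c_1,\dots,c_{n_1}\}$ and $C_2=\{c'_1,\dots,c'_{n_2}\}$ on the right, with $c_{n_1+j}\mapsto c'_j$, matching $\Mcat$.

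\textbf{Static components.} Locations and initial location coincide directly. By \Cref{def:lifting}, $L=S_1\times S_2$ with $\ell_0=(s_{0,1},s_{0,2})$ on the left. By \Cref{def:par-ta}, $L_1\times L_2=S_1\times S_2$ with initial location $(\ell_{0,1},\ell_{0,2})$ on the right. The clock sets agree under the renaming. The invariant on the left is $\bigwedge_{k=1}^{n_1+n_2}c_k\le(\Mcat)_k$, which splits as $\Phi_{L_1}(\ell_1)\wedge\Phi_{L_2}(\ell_2)$.

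\textbf{Transitions.} I will prove both inclusions by a case split on the action label.
\begin{itemize}
\item \emph{Input $a\in\ActI^{\mathrm{sync}}$.} On the left there is a transition with guard $\bigwedge_k c_k<M_k$ over all $n_1+n_2$ channels and reset $C$ iff $(s_1,s_2)\transition{a}(s_1',s_2')$ in $\sys_1\parop\sys_2$. By \Cref{def:par-lts} this holds iff $s_1\transition{a}s_1'$ and $s_2\transition{a}s_2'$. That in turn holds iff both lifted components have the respective input transitions, with guards $\varphi_1,\varphi_2$ and resets $C_1,C_2$. The synchronised rule of \Cref{def:par-ta} then yields exactly the guard $\varphi_1\wedge\varphi_2$ and reset $C_1\cup C_2=C$.
\item \emph{Output $a\in\ActOi{1}$ on channel $k\le n_1$.} The left-hand side has guard $c_k<M_k$ and reset $\{c_k\}$ to $(s_1',s_2)$ iff $s_1\transition{a}s_1'$. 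The interleaving rule of \Cref{def:par-ta} copies the identical guard and reset from $\chi^{\Mdec_1}(\sys_1)$. Outputs of $\sys_2$ are symmetric, with index shift $n_1+j$.
\item \emph{Quiescence labels.} On the left, $\delta_k$ is a self-loop at $(s_1,s_2)$ iff $(s_1,s_2)$ is $k$-quiescent in $\sys_1\parop\sys_2$.
\end{itemize}

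\textbf{Main obstacle.} The key step is a small lemma for the quiescence case: for $k\le n_1$, $(s_1,s_2)$ is $k$-quiescent in $\sys_1\parop\sys_2$ iff $s_1$ is $k$-quiescent in $\sys_1$, and symmetrically for channels of $\sys_2$.

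The proof uses the restriction to shared-environment composition. Channel-$k$ outputs lie in $\ActOi{1}$, which is disjoint from $\ActOi{2}$ and from the inputs. So by \Cref{def:par-lts} the only way $(s_1,s_2)$ enables $o!\in\ActO^k$ is via the interleaving rule with $s_1\transition{o!}$. No synchronisation can block or create such an output, and the state of $s_2$ is irrelevant. This is exactly the point where a Lynch–Tuttle style composition would fail, as remarked before \Cref{def:par-lts}.

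With the lemma in hand, the $\delta_k$ self-loop at $(s_1,s_2)$ on the left, with guard $c_k=M_k$ and reset $\{c_k\}$, matches the interleaved copy of $s_1$'s $\delta_k$ loop on the right. That copy is again a self-loop, since the second component is unchanged.

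Collecting the cases gives equality of the transition relations, and hence the theorem. The only genuinely non-syntactic content is the quiescence lemma; everything else is unfolding \Cref{def:lifting,def:par-lts,def:par-ta}.
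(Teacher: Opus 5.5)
Your proposal is correct and follows essentially the same route as the paper. Both compare locations, clocks, invariants and the input, output and $\delta_k$ transitions one by one, and in both the only non-syntactic step is that $(s_1,s_2)$ is $k$-quiescent in $\sys_1\parop\sys_2$ iff $s_1$ is $k$-quiescent in $\sys_1$, which holds by disjoint outputs and the absence of output-to-input synchronisation; your explicit renaming $c_{n_1+j}\mapsto c'_j$ just spells out the paper's indexing convention for $\Mcat$.
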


In other words, lifting a composed model gives exactly the composition of the lifted models (up to clock renaming). Thus, a modeller can specify each component in the untimed paradigm with its own bounds and compose the results without ever reasoning about a global time-out or building the multi-channel timed model by hand. With our definition of parallel composition and with the following lemma the conformance under composition then follows immediately.

\begin{restatable}[Compositionality of testable multi-ioco]{lemma}{lemMiocoCompose}
\label{lem:mioco-compose}
Let $\sys^1_I, \sys^1_S$ and let $\sys^2_I, \sys^2_S$ be the implementation--specification pairs of LTSs, with $\sys^1_I, \sys^2_I$ IOTSs. Assume they have disjoint outputs ($\ActOi{1} \cap \ActOi{2} = \emptyset$) and shared inputs, and have bound vectors $\Mdec_1 \in \RRplus^{n_1}$ and $\Mdec_2 \in \RRplus^{n_2}$.
If $\sys^1_I \: \tmiocoM{\Mdec_1} \: \sys^1_S$ and $\sys^2_I \: \tmiocoM{\Mdec_2} \: \sys^2_S$, then $\sys^1_I \parop \sys^2_I\: \tmiocoM{\Mcat} \: \sys^1_S \parop \sys^2_S$.
\end{restatable}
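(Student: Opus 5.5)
The plan is to reduce the statement to a projection argument on testable traces, using \Cref{thm:compose} to move between the composed lifting and the lifted components. I will first prove a projection lemma. For $\sigma$ over $\Actd$ of the composed system, write $\sigma{\restriction_1}$ for the restriction to $\ActI\cup\ActOi{1}\cup\{\delta_1,\dots,\delta_{n_1}\}$, and $\sigma{\restriction_2}$ analogously. The lemma states that $\sigma\in\TStraces(\sys_1\parop\sys_2)$ (w.r.t.\ $\Mcat$) iff $\sigma{\restriction_1}\in\TStraces(\sys_1)$ (w.r.t.\ $\Mdec_1$), $\sigma{\restriction_2}\in\TStraces(\sys_2)$ (w.r.t.\ $\Mdec_2$), and the two component timed realisations can be interleaved consistently with the order of $\sigma$.

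To prove it, I use \Cref{thm:compose} to replace $\chi^{\Mcat}(\sys_1\parop\sys_2)$ by $\chi^{\Mdec_1}(\sys_1)\parop\chi^{\Mdec_2}(\sys_2)$. A timed run of the product then splits into two component runs. These runs share input time stamps, since inputs are synchronised and reset both clock sets, while outputs and $\delta$'s are interleaved and reset only their own clocks. Conversely, two component runs agreeing on input timing can be merged. The per-component $\afterop$ sets multiply, because $k$-quiescence of $(s_1,s_2)$ for a channel $k$ of component $1$ depends only on $s_1$. Hence $\outT_k(\sys_1\parop\sys_2,\sigma)$ for a channel $k$ of component $1$ should be governed by $\outT_k(\sys_1,\sigma{\restriction_1})$ together with the feasibility of the merge.

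The main step takes $\sigma\in\TStraces(\sys^1_S\parop\sys^2_S)$ and $a\in\outT_k(\sys^1_I\parop\sys^2_I,\sigma)$, and assumes without loss of generality that $k$ is a channel of component $1$. By the lemma, $(\sigma a){\restriction_1}=\sigma{\restriction_1}a\in\TStraces(\sys^1_I)$ and $\sigma{\restriction_1}\in\TStraces(\sys^1_S)$, so the hypothesis $\sys^1_I\ \tmiocoM{\Mdec_1}\ \sys^1_S$ gives $\sigma{\restriction_1}a\in\TStraces(\sys^1_S)$. The component-$2$ restriction is unchanged by appending $a$, and is already testable in $\sys^2_S$ because $\sigma$ is.

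It then remains to show that the specification's realisations merge into a timed realisation of $\sigma a$. I expect this to be the main obstacle, because testability is a timing condition. The interleaving in $\sigma$ of $\delta$'s across components with different bounds constrains the delays: a $\delta_j$ of component $2$ placed before $a$ forces real time past $M'_j$ since the last reset of that clock. The implementation's witness $\rho_I$ for $\sigma a$ already has these delays, so my first attempt is to transport them. By \Cref{thm:preservation} and the hypothesis, $\chi^{\Mdec_1}(\sys^1_I)\ \mtioco{\Mdec_1}\ \chi^{\Mdec_1}(\sys^1_S)$. The timed observation from $\rho_I$'s component-$1$ part (same delays) therefore lies in the specification's timed outputs, provided the prefix lies in $\SttracesM(\chi^{\Mdec_1}(\sys^1_S))$ with those exact delays.

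In the canonic TA, whether a delay pattern is realisable depends only on the action sequence and the reset structure, not on the location, since all guards and invariants are uniform. Hence delays realised in the implementation are realisable along any specification run with the same projection. This is the key fact I would isolate and check carefully, and it would also be the fix if the naive argument fails. Symmetrically, component $2$ in the specification uses the delays of $\rho_I$'s component-$2$ part. Merging the two via \Cref{thm:compose} yields $\sigma a\in\TStraces(\sys^1_S\parop\sys^2_S)$. Finally, I must check that $\sys^1_I\parop\sys^2_I$ is an IOTS, which is immediate since both components are input-enabled on the shared inputs.
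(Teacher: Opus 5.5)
\paragraph{Verdict.} Your strategy rests on the right key fact: uniformity of the canonic lifting, used to transport the implementation witness's delays to the specification. There is, however, a gap in the final step. You merge the two specification component runs via \Cref{thm:compose}, and you justify this with the converse of your projection lemma (``two component runs agreeing on input timing can be merged''). That converse is false as stated.

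\paragraph{Why the converse fails.} In the product, each component's invariant keeps constraining time while only the other component acts. Two separately valid component runs do not guarantee this, even with matching input times and an order consistent with $\sigma$. Concretely:
\begin{itemize}
  \item take $M_1 = 1 < M_2 = 5$;
  \item let $\sys_1$ start in a state enabling some $o! \in \ActO^{1}$, so it is not $1$-quiescent;
  \item let $\sys_2$ start in a $2$-quiescent state;
  \item let $\sigma = \delta_2$.
\end{itemize}
Then $\sigma|_1 = \varepsilon$ and $\sigma|_2 = \delta_2$ are both testable, and there is nothing to order. Yet $\delta_2 \notin \TStraces(\sys_1 \parop \sys_2)$. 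Reaching $c_2 = 5$ requires time to pass $c_1 = 1$, which the composed invariant forbids, because $c_1$ is reset only by channel-$1$ outputs, $\delta_1$ or inputs. Order consistency rules out traces like $\delta_2 o_1$, but not this time-lock caused by the idle component.

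\paragraph{Why your instance still works, and the simpler route.} In your specific application the merge does succeed, but for a reason you do not give. Both specification component runs carry the timestamps of the single valid run $\rho_I$ of $\chi^{\Mcat}(\sys^1_I \parop \sys^2_I)$. The valuation after each prefix is determined by the timed trace alone, and guards and invariants are the same canonic constraints at every location. So every check in the merged run is one that $\rho_I$ already passed. That is exactly your key fact applied to the composed canonic TA. Once you apply it there, the component-level detour through \Cref{thm:preservation} and the re-merge is unnecessary. This is the paper's route:
\begin{itemize}
  \item It uses only the forward (projection) direction, to invoke the hypothesis.
  \item From the hypothesis it keeps only the untimed consequence $\sigma|_i \cdot a \in \Straces(\sys^i_S)$.
  \item With $S = \sys^1_S \parop \sys^2_S$, the set $S \afterop \sigma$ is the product of the component after-sets. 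Channel-$k$ enabledness and $k$-quiescence of a pair depend only on the owning component. Hence $\sigma \cdot a \in \Straces(S)$.
  \item It then concludes $\rho_I \in \SttracesM(\chi^{\Mcat}(S))$ directly by uniformity.
\end{itemize}
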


\begin{restatable}[Composition of timed conformance]{corollary}{corComp}
\label{cor:compose-conf}
Let $\sys^1_I, \sys^1_S$ and $\sys^2_I, \sys^2_S$ be implementation--specification pairs of LTSs, with $\sys^1_I, \sys^2_I$ IOTSs. Assume the pairs have disjoint outputs ($\ActOi{1} \cap \ActOi{2} = \emptyset$) and shared inputs, and have bound vectors $\Mdec_1 \in \RRplus^{n_1}, \Mdec_2 \in \RRplus^{n_2}$. If $\sys^1_I \: \tmiocoM{\Mdec_1} \: \sys^1_S$ and $\sys^2_I \: \tmiocoM{\Mdec_2} \: \sys^2_S$, then
$\chi^{\Mcat}(\sys^1_I \parop \sys^2_I)\ \mtioco{\Mcat}\
  \chi^{\Mcat}(\sys^1_S \parop \sys^2_S).$
\end{restatable}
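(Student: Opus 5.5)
The corollary follows by chaining three earlier results. I will not argue about timed traces directly.

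First, from the two hypotheses $\sys^1_I \:\tmiocoM{\Mdec_1}\: \sys^1_S$ and $\sys^2_I \:\tmiocoM{\Mdec_2}\: \sys^2_S$, \Cref{lem:mioco-compose} gives
\[
\sys^1_I \parop \sys^2_I \;\tmiocoM{\Mcat}\; \sys^1_S \parop \sys^2_S .
\]
This holds over the combined $(n_1+n_2)$-channel output partition, which by \Cref{def:par-lts} is the disjoint union of the component partitions. It carries the concatenated bound vector $\Mcat$.

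Second, I will apply \Cref{thm:preservation} to the composed pair with bound vector $\Mcat$. Its left-to-right direction then gives $\chi^{\Mcat}(\sys^1_I \parop \sys^2_I)\ \mtioco{\Mcat}\ \chi^{\Mcat}(\sys^1_S \parop \sys^2_S)$, which is exactly the claim.

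To invoke \Cref{thm:preservation}, I need its side conditions. The composed specification is an LTS. Both composed systems share the same input set $\ActI^{\mathrm{sync}}$ and the same $(n_1+n_2)$-channel partition. The condition I expect to need an actual argument is that $\sys^1_I \parop \sys^2_I$ is an IOTS. This follows from \Cref{def:par-lts}. Take a state $(s_1,s_2)$ and an input $i? \in \ActI^{\mathrm{sync}}$. Input-enabledness of each component gives $s_1 \transition{i?} s_1'$ and $s_2 \transition{i?} s_2'$. The synchronised rule then yields $(s_1,s_2) \transition{i?} (s_1',s_2')$.

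An alternative route would use \Cref{thm:compose} to rewrite both sides as compositions of the individually lifted components. That route is unnecessary here, since the statement is phrased with $\chi^{\Mcat}$ applied to the composed LTSs.

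The only genuinely subtle point is already absorbed by \Cref{lem:mioco-compose}: that testable traces of the composition behave well with respect to the components. The main obstacle at this level is therefore bookkeeping. I must ensure that the bound vector and channel indexing used in \Cref{lem:mioco-compose} coincide with the $\Mcat$ used in \Cref{thm:preservation}, so that $\TStraces$ and the relation $\tmiocoM{\Mcat}$ refer to the same lifting $\chi^{\Mcat}$ of the composed systems.
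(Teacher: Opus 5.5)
Your proposal is correct and follows the paper's proof essentially step for step: \Cref{lem:mioco-compose} gives $\tmiocoM{\Mcat}$ for the composed pair, and input-enabledness of $\sys^1_I \parop \sys^2_I$ follows from the synchronised input rule of \Cref{def:par-lts}. The forward direction of \Cref{thm:preservation} with bound vector $\Mcat$ then concludes. The paper additionally cites \Cref{cor:iots-to-iota} to note that the lifted composed implementation is an IOTA, so that $\mtioco{\Mcat}$ is well-typed; in your version this is left implicit, as it is already handled inside \Cref{thm:preservation}.
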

\section{Related Work}
\label{sec:related}

Our contribution sits at the intersection of several extensions of \ioco: quiescence, timed conformance, multiple channels, and compositionality. Most prominently, we extend our own prior work~\cite{BvdBGS25-not-blind,BvdBGS26-not-blind} to a multi-channel setting. 

\begin{description}
    \item[Input-Output conformance and quiescence.] 
    Our work builds on the \ioco\ testing theory of Tretmans~\cite{T08}. Stokkink et al. later make quiescence a first-class citizen through quiescent transition systems~\cite{STS12} and later yet treat divergence explicitly~\cite{STS13}. In~\cite{TJ22}, Tretmans and Janssen revisit the foundations of the relation and note some shortcomings.
    \item[Multiple-channels.]
    Heerink~\cite{H98} introduces refusal testing with multiple input/output channels, and Brand\'an Briones and Brinksma~\cite{BBB05} give a multi-input/output relation; the first is untimed and operates on several input channels, the latter operates on timed-labelled transition systems. 
    \item[Timed testing.]
    Several timed-\ioco\ variants exist: \textbf{tioco} of Larsen et al.~\cite{LMN04}, \textbf{rtioco} of Krichen and Tripakis~\cite{KT04}, the bounded-quiescence $\tioco_M$ of Brand\'an Briones and Brinksma~\cite{BBB04}, and the liveness-preserving \textbf{ltioco} of Luthmann et al.~\cite{LGL19}, which, like us treats quiescence under composition but with a single global time-out and synchronisation hidden to internal actions.
    \item[Distributed systems.]
    Closest in spirit to our work is \emph{distributed} testing, where a system is observed through several interfaces. Hierons et al.\ adapt \ioco\ to this setting as \textbf{dioco}~\cite{HMN12}, and Gaston et al.~\cite{GHG13} give a timed distributed relation. The distinction is that distributed testing places independent (non-synchronising) testers at the ports, so the global order of events cannot be reconstructed. In contrast, our channels are observed by a single tester, so order and per-channel quiescence remain fully recoverable.
    \item[Compositionality.] 
    Compositionality of \ioco\ has been well-studied, but most of it is untimed. The seminal work of Van der Bijl et al.~\cite{BRT03} establishes that \ioco\ is compositional under certain restrictions; Daca et al.~\cite{DHKN14} study compositional specifications, and van Cuyck et al.~\cite{vCvAT23,vCvAT24} characterise compositionality via \emph{mutual acceptance}. Other compositions than parallel composition have been studied as well, e.g. merge and quotient by Bene\v{s} et al.~\cite{BDHKN15}, and sequential composition by Zameni et al.~\cite{ZvdBR25}.
\end{description}
\section{Conclusion}
\label{sec:conclusion}
We presented the multi-channel lifting $\chidec$ from untimed specifications to timed automata with one clock and one quiescence time-out \emph{per output channel} as an extension to our previous work~\cite{BvdBGS25-not-blind}. The lifting connects untimed multi-channel \mioco~\cite{H98} with the timed multi-channel relation $\mtioco{\Mdec}$~\cite{BBB05}  (\Cref{thm:preservation}). We show that it commutes with test generation and preserves verdicts (\Cref{thm:test-corr,thm:verdicts}). Our main contribution shows that it also commutes with shared-environment parallel composition (\Cref{thm:compose}), which means that conformance of independently specified components carries over to the composed timed system (\Cref{cor:compose-conf}). For the modeller, it means each component may be specified in the untimed world with its own per-channel bounds and  composed without the need of building the multi-channel timed model by hand. 

An immediate extension is to also admit internal $\tau$-actions and enable operations such as \emph{action hiding}. Internal steps let time elapse unobserved, so suspension traces and quiescence must be carefully reconstructed, as in an extension of our previous work~\cite{BvdBGS26-not-blind}.
A related direction is admitting output-to-input synchronisation in parallel composition as in~\cite{LT88}. This either requires input-enabledness on the synchronising actions or fragmentation of inputs into channels to keep quiescence componentwise. 

%- - - - - Acknowledgements - - - - - 
\paragraph{Acknowledgements.}
\ifanonymous
Acknowledgements omitted for double-blind review.
\else
This project has received funding from the European Union’s Horizon 2020 research and innovation programme under the Marie Skłodowska-Curie grant agreement No 101008233 (\textsc{MISSION}).
This research was supported by NWO project OCENW.M.23.155 \emph{Evidence-Driven Black-Box Checking (EVI)}.
\fi
\newpage
%- - - - - Bibliography - - - - - 
\bibliographystyle{abbrv}
\bibliography{thebib.bib}
%- - - - - - Appendix - - - - - - 
% \newpage
\appendix
% This is hacky, but after some tinkering the only way I got it to work..
\setcounter{section}{0}
\renewcommand{\thesection}{\Alph{section}}
\hypertarget{app:proofs-target}{}
\section{Appendix: Formal Proofs}
\label{app:proofs}
Below we provide detailed proofs to the theorems, corollaries and lemmas in the paper. The enumeration refers to the original one used in the paper.

\paragraph{Trace prefix.} Throughout, we use $\sqsubseteq$ to denote the trace prefix relation, i.e. given $\trace,\trace'\in\Act^*$ with $\trace=\action_1,\ldots,\action_k$ for $\trace'=\action_1,\ldots,\action_n$ for some $k\leq n$ we use the notation $\trace\sqsubseteq\trace'$ to denote that $\trace'$ is a \emph{subtrace} of $\trace$. 

%MGChange: new observation used by Thm 1, Thm 2, Thm 3 and Lemma 3

\paragraph{Uniformity of the lifting.}
The guard, reset and invariant that $\chidec$ adds on a transition depend only on the channel of its action and on $\Mdec$ (\Cref{def:lifting}), never on the system being lifted; and the $\delta_{\kchan}$ self-loop exists exactly at $\kchan$-quiescent states, which is what makes $\delta_{\kchan}$ a suspension action at that state in the first place. Consequently, for a suspension trace $\sigma$ that is a trace of two LTSs $\sys$ and $\sys'$, the timed realizations coincide:
$$\{\rho \in \SttracesM(\chidec(\sys)) \mid \rho{\downarrow}=\sigma\} = \{\rho \in \SttracesM(\chidec(\sys')) \mid \rho{\downarrow}=\sigma\}.$$
In particular, since the valuation reached after a timed trace $\rho$ is determined by $\rho$ (there are no $\tau$-steps), whether a step $(d,a)$ is admissible after $\rho$ depends only on $\rho$, $\Mdec$, and whether $a$ is enabled at the reached state in the untimed system. We refer to this as \emph{uniformity} of the lifting.

%--------------------------------------------------------
\lemCanonic*
%--------------------------------------------------------

\begin{proof}
%MGChange: the former LTS-to-TA direction has been removed; that direction now holds by \Cref{def:TStraces}.
The proof is by construction. Let
$\ttrace \in \SttracesM(\chidec(\sys))$. By~\Cref{def:mtioco-obs} any suspension timed trace can be written as
$$\location_0 \transition{(d_1, \action_1)} \location_1
  \transition{(d_2, \action_2)} \ldots
  \transition{(d_{m}, \action_{m})} \location_{m},$$
and, since $\ttrace \in \SttracesM(\chidec(\sys))$, every step is a genuine timed transition of $\chidec(\sys)$, i.e. its guards hold and its location invariants are respected throughout the elapsed delay. By~\Cref{def:lifting} every transition in $\trans_{\chidec(\sys)}$ extends a discrete transition of $\sys$ with a guard, an invariant, and a reset, or is an added $\delta_{\kchan}$ self-loop at a $\kchan$-quiescent location; explicitly, each has one of the forms:
\begin{itemize}
    \item $(\location, \iaction?, \bigwedge_{k} c_k < M_k, C,
          \location')$ for inputs
          $(\state, \iaction?, \state') \in \trans_{\sys}$ with
          $\iaction? \in \ActI$;
    \item $(\location, \oaction!, c_{\kchan} < M_{\kchan},
          \{c_{\kchan}\}, \location')$ for outputs
          $(\state, \oaction!, \state') \in \trans_{\sys}$ with
          $\oaction! \in \ActO^{\kchan}$;
    \item $(\location, \delta_{\kchan}, c_{\kchan} = M_{\kchan},
          \{c_{\kchan}\}, \location)$ for $\kchan$-quiescent locations.
\end{itemize}
The projection $\project{\ttrace}$ deletes the delays $d_j \in \RRnonzero$, which are the only information not present in $\sys$, leaving the sequence
$$\trace = \action_1 \action_2 \ldots \action_{m}.$$
We argue that $\trace \in \Straces(\sys)$ by following the suspended timed trace step-by-step. Since $L = S$, each $\location_j$ identifies a state $\state_j$ (w.l.o.g. otherwise rename the $\ell$ but keep the location-to-location transitions in place), and for each step we read off the matching transition of $\sys$, that is:
\begin{itemize}
  \item if $\action_j \in \ActI \cup \ActO$, the corresponding lifted
        transition extends a discrete transition
        $\state_{j-1} \transition{\action_j} \state_j$ of $\sys$, which
        is therefore present in $\trace$;
  \item if $\action_j = \delta_{\kchan}$, the step took the lifted
        self-loop, which by~\Cref{def:lifting} exists only at a
        $\kchan$-quiescent location. By \Cref{def:mtioco-obs} the step is admissible only if $\location_{j-1}$ is $\kchan$-quiescent, so $\state_{j-1}$ is $\kchan$-quiescent
        (cf.~\Cref{def:k-quiescent}) and contributes $\delta_{\kchan}$
        to its suspended traces. 
\end{itemize}
Hence every action of $\trace$ is admissible in $\sys$ in the order it
appears, so $\trace \in \Straces(\sys)$ with
$\project{\ttrace} = \trace$. Moreover, then by \Cref{def:TStraces} $\trace \in \TStraces(\sys)$. \qed
\end{proof}

%--------------------------------------------------------
\corIotsIota*
%--------------------------------------------------------
\begin{proof}
By \Cref{def:lifting}, the lifting adds on every input transition the guard $\bigwedge_{\kchan=1}^n c_{\kchan} < M_{\kchan}$. This guard is satisfied at exactly those valuations $v$ with $v_{\kchan} < M_{\kchan}$ for all $\kchan$, which is precisely the condition under which an IOTA is required to be input-enabled (\Cref{def:iota}). Since an IOTS enables every input at every state, and the lifting preserves states and adds these input transitions, every location of $\chidec(\sys_I)$ enables every input at every such valuation. Hence, if $\sys_I$ is an IOTS then $\chidec(\sys_I)$ is an IOTA. \qed
\end{proof}

%--------------------------------------------------------
\thmPreservation*
%--------------------------------------------------------

%MGChange: proof rewritten for \tmioco over testable extensions (\Cref{def:tmioco}).
\begin{proof}
Let $\Mdec \in \RRplus^{n}$ and write $\implTA = \chidec(\sys_I)$, $\specTA = \chidec(\sys_S)$. Since $\sys_I$ is an IOTS, $\implTA$ is an IOTA (\Cref{cor:iots-to-iota}). W.l.o.g.\ we identify each state with its location, since $L = S$ under $\chidec$ (\Cref{def:lifting}).

We first record how untimed and timed observables relate. Therefore, let $\rho \in \SttracesM(\specTA)$ and $\sigma = \rho{\downarrow}$; by \Cref{lem:canonic}, $\sigma \in \TStraces(\sys_S)$. Let $v$ be the valuation reached after $\rho$, which is the same in $\implTA$ and $\specTA$ (via uniformity of the lifting, see above). For $a \in \ActO^{\kchan} \cup \{\delta_{\kchan}\}$ and either system $\sys \in \{\sys_I, \sys_S\}$ with lifting $\systemTA$, \Cref{def:mtioco-obs} gives that some $(d,a) \in \outop^{\Mdec}_{\kchan}(\systemTA \afterop \rho)$ iff $\rho\cdot(d,a) \in \SttracesM(\systemTA)$ iff $\sigma \cdot a \in \TStraces(\sys)$ via $\rho\cdot(d,a)$. That is,
\begin{equation}
\label{eqn:obs-corr}
\{a \mid \exists d \ (d,a) \in \outop^{\Mdec}_{\kchan}(\systemTA \afterop \rho)\} \subseteq \outT_{\kchan}(\sys, \sigma),
\end{equation}
and conversely, if $\sigma\cdot a \in \TStraces(\sys)$ then by uniformity every realization of $\sigma$, in particular $\rho$, extends to a realization of $\sigma\cdot a$, so some $(d,a)$ lies in $\outop^{\Mdec}_{\kchan}(\systemTA \afterop \rho)$. Hence \eqref{eqn:obs-corr} is an equality. Finally, for a fixed $a$ the set of admissible $d$ after $\rho$ depends only on $v$, $\Mdec$ and the guard form of $a$ (uniformity), and is therefore the same for $\implTA$ and $\specTA$ whenever $a$ is enabled in both.

\medskip\noindent$\boxed{\Rightarrow}$
Assume $\sys_I \tmioco \sys_S$. Let $\rho \in \SttracesM(\specTA)$, $\sigma=\rho{\downarrow}$, and $(d,a) \in \outop^{\Mdec}_{\kchan}(\implTA \afterop \rho)$. By \eqref{eqn:obs-corr} for $\sys_I$, $a \in \outT_{\kchan}(\sys_I,\sigma)$; by $\tmioco$, $a \in \outT_{\kchan}(\sys_S,\sigma)$; by the converse of \eqref{eqn:obs-corr} for $\sys_S$, some $(d',a) \in \outop^{\Mdec}_{\kchan}(\specTA \afterop \rho)$, and since the admissible delays for $a$ coincide on both sides, $(d,a) \in \outop^{\Mdec}_{\kchan}(\specTA \afterop \rho)$. Hence $\implTA \mtioco{\Mdec} \specTA$.

\medskip\noindent$\boxed{\Leftarrow}$
Assume $\implTA \mtioco{\Mdec} \specTA$. Let $\sigma \in \TStraces(\sys_S)$ and $a \in \outT_{\kchan}(\sys_I,\sigma)$, i.e.\ $\sigma\cdot a \in \TStraces(\sys_I)$ via some $\rho\cdot(d,a) \in \SttracesM(\implTA)$ with $\rho{\downarrow} = \sigma$. As $\sigma \in \Straces(\sys_S)$, uniformity gives $\rho \in \SttracesM(\specTA)$. Now $(d,a) \in \outop^{\Mdec}_{\kchan}(\implTA \afterop \rho)$, so by $\mtioco{\Mdec}$ also $(d,a) \in \outop^{\Mdec}_{\kchan}(\specTA \afterop \rho)$, i.e.\ $\rho\cdot(d,a) \in \SttracesM(\specTA)$, with $\sigma\cdot a \in \TStraces(\sys_S)$ and $a \in \outT_{\kchan}(\sys_S,\sigma)$. Hence $\sys_I \tmioco \sys_S$.

\medskip
For the final claim, $\sys_I \mioco \sys_S$ implies $\sys_I \tmioco \sys_S$ (\Cref{def:tmioco}), hence $\implTA \mtioco{\Mdec} \specTA$ by the first direction.
\qed
\end{proof}

%--------------------------------------------------------
\thmTestCorr*
%--------------------------------------------------------

\begin{proof}
Recall that on test cases $\chidec$ decorates the existing transitions and adds none (\Cref{sec:tests}), so $\pass$ and $\fail$ remain terminal (specifically, no $\delta$ are added.
The proof is in two steps via set inclusion in both direction, i.e. case (1) $\chidec(\mathcal{T}_{\LTS}(\sys_S)) \subseteq \mathcal{T}_{\TA}(\chidec(\sys_S))$ and case (2) $\mathcal{T}_{\TA}(\chidec(\sys_S)) \subseteq \chidec(\mathcal{T}_{\LTS}(\sys_S))$. 

\medskip\noindent\emph{$\boxed{\chidec(\mathcal{T}_{\LTS}(\sys_S)) \subseteq \mathcal{T}_{\TA}(\chidec(\sys_S))}$}
Let $\test \in \mathcal{T}_{\LTS}(\sys_S)$ and put
$\testTA = \chidec(\test)$. We verify each clause of
\Cref{def:ta-test} to show that $\testTA\in\mathcal{T}_{\TA}(\chidec(\sys_S))$ is a timed test case.

\begin{description}
    \item[Structure.] $\chidec$ is structure preserving, so $\testTA$ has the same actions and the $\delta_{\kchan}$, already present in $\test$ by \Cref{def:lts-test}, the same tree shape, and the same finite, acyclic, deterministic structure. The $\delta_{\kchan}$ transitions in a test case lead to successor states, not self-loops (cf. \Cref{def:lts-test}), so acyclicity is preserved. Further, $\pass, \fail$ have no outgoing transitions in $\test$; $\chidec$ adds none, so they are terminal in $\testTA$. Every transition of $\testTA$ lies on a timed trace: a transition of $\test$ lies on some $\trace \in \traces(\test)$ ending in an input, $\pass$ or $\fail$; by input-specifiedness resp.\ soundness that trace (or its prefix before the final output) is in $\TStraces(\sys_S)$, hence realizable, and the final step is realizable by uniformity of the lifting.
    \item[Clocks and invariants.] By construction $\chidec$ adds the clocks $C = \{c_1,\dots,c_n\}$ and the invariant $\bigwedge_k c_k \le M_k$ at every non-terminal location, as required.
    \item[Observation/stimulation and per-channel guards.] 
    Each non-terminal state of $\test$ enables either all of $\ActO \cup \{\delta_1,\dots,\delta_n\}$ (observe) or $\ActO$ and one input (stimulate) (cf.~\Cref{def:lts-test}). Under $\chidec$ these become output transitions guarded by $c_{\kchan} < M_{\kchan}$ or $\delta_{\kchan}$ transitions guarded by $c_{\kchan} = M_{\kchan}$, or the output transitions plus a single input guarded $\bigwedge_k c_k < M_k$ with reset $C$. These are exactly the guard/reset forms required by \Cref{def:ta-test}.
    \item[Input-specifiedness, soundness, correctness.]
    For input-specifiedness, let $\ttrace \cdot (\tim, \iaction?) \in \ttraces(\testTA)$. Its projection $\ttrace{\downarrow} \cdot \iaction?$ lies in $\traces(\test)$, hence by input-specifiedness of $\test$ in $\TStraces(\sys_S) \subseteq \Straces(\sys_S)$. By uniformity, the realization $\ttrace \cdot (\tim, \iaction?)$ of this trace in $\testTA$ is also a realization in $\chidec(\sys_S)$, i.e.\ $\ttrace \cdot (\tim, \iaction?) \in \SttracesM(\chidec(\sys_S))$.
    Soundness transfers identically, i.e. a timed trace of $\testTA$ reaching $\pass$ projects to a trace of $\test$ reaching $\pass$, which lies in $\TStraces(\sys_S)$, and uniformity places the timed trace in $\SttracesM(\chidec(\sys_S))$.
    For correctness, let $\ttrace \cdot (\tim, \oaction) \in \ttraces(\testTA)$ reach $\fail$. Its projection reaches $\fail$ in $\test$, so by correctness of $\test$ it is not in $\TStraces(\sys_S)$; by \Cref{def:TStraces} it therefore has no realization in $\chidec(\sys_S)$, and therefore in particular $\ttrace \cdot (\tim, \oaction) \notin \SttracesM(\chidec(\sys_S))$.
\end{description}
All properties together yield $\chidec(\test)=\testTA\in\mathcal{T}_{\TA}(\chidec(\sys_S))$.

\medskip\noindent\emph{$\boxed{\chidec(\mathcal{T}_{\LTS}(\sys_S)) \supseteq \mathcal{T}_{\TA}(\chidec(\sys_S))}$}
Let $\testTA \in \mathcal{T}_{\TA}(\chidec(\sys_S))$ and let $\test=\project{\testTA}$, i.e. its \emph{projection} by erasing clocks, invariants, guards, and resets, but keeping locations (resp. states), the initial location, and the labelled transition relation. We now show $\chidec(\test) = \testTA$ and $\test \in \mathcal{T}_{\LTS}(\sys_S)$.

By \Cref{def:ta-test}, every transition of $\testTA$ has one of the
three canonic forms (1) input guarded by $\bigwedge_k c_k < M_k$ reset $C$; (2) channel-$\kchan$ output guarded $c_{\kchan} < M_{\kchan}$ reset $\{c_{\kchan}\}$, or (3) $\delta_{\kchan}$ guarded $c_{\kchan} = M_{\kchan}$ reset $\{c_{\kchan}\}$, and every non-terminal location carries $\bigwedge_k c_k \le M_k$. 
These are precisely the decorations $\chidec$ adds (\Cref{def:lifting}). Since the projection $\test$ keeps exactly the discrete transitions and $\delta_{\kchan}$ occurs in $\testTA$ as an ordinary transition (rather than being added by $\chidec$), re-applying $\chidec$ restores every guard, reset and invariant. Hence $\chidec(\test) = \testTA$.

The projection $\project{\testTA}$ is tree-shaped, finite, acyclic, and deterministic because $\testTA$ is, these being properties of the discrete structure. The terminal states and the observe/stimulate options are read off directly: an observe location (enabling all outputs and all $\delta_{\kchan}$) projects to an observe state; a stimulate location (enabling all outputs and one input) projects to a stimulate state. For the remaining conditions, note that since every transition of $\testTA$ lies on a timed trace (\Cref{def:ta-test}), every $\trace \in \traces(\test)$ is the projection of some $\ttrace \in \ttraces(\testTA)$. Input-specifiedness: if $\trace\cdot\iaction? \in \traces(\test)$ then some $\ttrace\cdot(\tim,\iaction?) \in \ttraces(\testTA)$ projects to it, which by input-specifiedness of $\testTA$ lies in $\SttracesM(\chidec(\sys_S))$, so $\trace\cdot\iaction? \in \TStraces(\sys_S)$ by \Cref{def:TStraces}. Soundness is identical. Correctness: if $\trace\cdot\oaction \in \traces(\test)$ reaches $\fail$, then every realization $\ttrace\cdot(\tim,\oaction) \in \ttraces(\testTA)$ reaches $\fail$ and by correctness of $\testTA$ is not in $\SttracesM(\chidec(\sys_S))$; as this holds for every realization, $\trace\cdot\oaction \notin \TStraces(\sys_S)$. Hence $\test \in \mathcal{T}_{\LTS}(\sys_S)$, and $\testTA = \chidec(\test) \in \chidec(\mathcal{T}_{\LTS}(\sys_S))$.

\medskip
Both inclusions hold, so
$\chidec(\mathcal{T}_{\LTS}(\sys_S)) =
\mathcal{T}_{\TA}(\chidec(\sys_S))$. \qed
\end{proof}

%--------------------------------------------------------
\thmVerdicts*
%--------------------------------------------------------

\begin{proof}
Let $\Mdec\in\RRplus^n$ and write $\implTA = \chidec(\sys_I)$ and $\specTA = \chidec(\sys_S)$. Recall from LTS-test verdicts (text below \Cref{def:lts-test}) that $\sys_I$ fails a test $\test$ iff some $\trace \in \TStraces(\sys_I) \cap \traces(\test)$ reaches $\fail$, and passes otherwise; likewise $\implTA$ fails a timed test $\testTA$ iff some $\ttrace \in \SttracesM(\implTA) \cap \ttraces(\testTA)$ reaches $\fail$. Since a test case carries $\delta_1,\dots,\delta_n$ in its alphabet (\Cref{def:lts-test}), its suspension traces coincide with its traces.

By \Cref{thm:test-corr}, every timed test case for $\chidec(\sys_S)$ is $\chidec(\test)$ for a $\test \in \mathcal{T}_{\LTS}(\sys_S)$, and vice versa. We use this bijection to prove the preservation of verdicts, i.e. for a test $\test$ and its lifting $\testTA = \chidec(\test)$, a testable trace reaches $\fail$ in $\test$ while being a trace of $\sys_I$ iff its timed trace reaches $\fail$ in $\testTA$ while being a trace of $\implTA$:
\begin{equation}
\begin{split}
  & \exists\, \trace \in \TStraces(\sys_I)\cap\traces(\test):
    \test \transition{\trace} \fail \\
  \iff\ &
  \exists\, \ttrace \in \SttracesM(\implTA)\cap\ttraces(\testTA):
    \testTA \Transition{\ttrace} \fail .
\end{split}
\label{eqn:fail-corr}
\end{equation}
\noindent$\boxed{\Rightarrow}$ Given is
$\trace\in\TStraces(\sys_I)\cap\traces(\test)$ with $\test\xrightarrow{\trace}\fail$.
By \Cref{def:TStraces} there is $\ttrace \in \SttracesM(\implTA)$ with $\ttrace{\downarrow} = \trace$. Since $\trace \in \traces(\test)$ and $\testTA = \chidec(\test)$ adds the same canonic guards as $\implTA$, uniformity gives $\ttrace \in \ttraces(\testTA)$. Since $\fail$ is reached in $\test$ along $\trace$, it is reached in $\testTA = \chidec(\test)$ along $\ttrace$, because $\chidec$ maps the discrete transition into $\fail$ to the corresponding timed transition into $\fail$.

\medskip\noindent$\boxed{\Leftarrow}$ This case is symmetrical and we project $\ttrace$ by ${\downarrow}$. The resulting projection $\trace$ is in $\TStraces(\sys_I)$ (by \Cref{lem:canonic}) and a trace of $\test$ (since $\testTA = \chidec(\test)$ retains exactly the discrete transitions), and reaches $\fail$ in $\test$.

\medskip\noindent\emph{(1) Preservation of passing.}
The proof is by contraposition. Suppose $\implTA = \chidec(\sys_I)$ fails $\mathcal{T}_{\TA}(\chidec(\sys_S))$. Then some $\testTA \in \mathcal{T}_{\TA}(\chidec(\sys_S))$ has a trace of $\implTA$ reaching $\fail$, i.e.\ the right side of \eqref{eqn:fail-corr} holds. By \Cref{thm:test-corr}, $\testTA = \chidec(\test)$ for some $\test \in \mathcal{T}_{\LTS}(\sys_S)$, so \eqref{eqn:fail-corr} gives a $\sys_I$-trace reaching $\fail$ in $\test$. Hence $\sys_I$ fails $\test$, and therefore does not pass $\mathcal{T}_{\LTS}(\sys_S)$.

\medskip\noindent\emph{(2) Preservation of failing.}
Suppose $\sys_I$ fails $\mathcal{T}_{\LTS}(\sys_S)$, i.e.\ some $\test \in \mathcal{T}_{\LTS}(\sys_S)$ has a trace of $\sys_I$ reaching $\fail$ in $\test$. Then $\testTA = \chidec(\test) \in \mathcal{T}_{\TA}(\chidec(\sys_S))$ by \Cref{thm:test-corr}, and the left side of \eqref{eqn:fail-corr} holds, so the right side gives an $\implTA$-trace reaching $\fail$ in $\testTA$. Hence $\implTA$ fails $\testTA$, and therefore fails $\mathcal{T}_{\TA}(\chidec(\sys_S))$. \qed
\end{proof}

%--------------------------------------------------------
\thmComp*
%--------------------------------------------------------

\begin{proof}
Let $\sys_1 \parop \sys_2 = \langle S_1 \times S_2, \Act,
\to_{\parop}, (s_{0,1}, s_{0,2})\rangle$ be the composed system as per \Cref{def:par-lts} and recall that its output partition is the disjoint union of the two component partitions, with $\sys_1$ channels indexed $1,\ldots,n_1$ and $\sys_2$'s channels indexed $n_1+1, \dots, n_1+n_2$.
Under this indexing the bound vector of the composition is exactly $\Mcat$, and $\chidec[\Mcat]$ introduces clocks $c_1, \dots, c_{n_1+n_2}$. We show the two sides of the equation have identical locations, clocks, transitions and invariants, which ultimately proves their equivalence.

\begin{description}
    \item[Locations and clocks.] Both sides have location set $S_1 \times S_2$ and initial location $(s_{0,1}, s_{0,2})$: the left by $L = S$ of $\chidec$ (\Cref{def:lifting}) applied to $\sys_1 \parop \sys_2$, and the right by the product of \Cref{def:par-ta}. Both have clock set $\{c_1,\dots,c_{n_1+n_2}\}$: the left because the composed partition has $n_1+n_2$ channels, the right because \Cref{def:par-ta} takes the union of the component clock sets. The clock indexing coincides by the convention above.
    \item[Input transitions.] Recall that we stipulate that all inputs are shared between all composing systems, and let $i? \in \ActI^{\mathrm{sync}}$ be such a shared input. On the left, $\sys_1 \parop \sys_2$ synchronises it (\Cref{def:par-lts}): $(\ell_1,\ell_2) \transition{i?} (\ell_1',\ell_2')$ iff $\ell_1 \transition{i?} \ell_1'$ and $\ell_2 \transition{i?} \ell_2'$. Lifting, $\chi^{\Mcat}$ gives guard $\bigwedge_{k=1}^{n_1+n_2} c_k < (\Mcat)_k$ and reset of all clocks. On the right, $i?$ is lifted within each component, i.e. guard $\bigwedge_{k=1}^{n_1} c_k < (\Mdec_1)_k$ and $\bigwedge_{k=1}^{n_2} c_{n_1+k} < (\Mdec_2)_k$, each resetting all of its own clocks. The synchronisation clause of \Cref{def:par-ta} joins the guards and unifies the resets. The joined guard is $\bigwedge_{k=1}^{n_1+n_2} c_k < (\Mcat)_k$ and the unified reset is all of $\{c_1,\dots,c_{n_1+n_2}\}$, matching the left. 
    \item[Output transitions.] Let $o! \in \ActO^{\kchan}(\sys_1)$, so $\kchan \le n_1$. Outputs are disjoint, so $o!$ belongs to $\sys_1$ and is not shared with $\sys_2$ by assumption. On the left, $o!$ is a transition of $\sys_1 \parop \sys_2$ via the interleaving clause of \Cref{def:par-lts}: $(\ell_1,\ell_2) \transition{o!} (\ell_1',\ell_2)$ iff $\ell_1 \transition{o!} \ell_1'$ in $\sys_1$. Lifting it, $\chi^{\Mcat}$ gives guard $c_{\kchan} < (\Mcat)_{\kchan}$$ = (\Mdec_1)_{\kchan}$ and reset $\{c_{\kchan}\}$. On the right, the same $\ell_1 \transition{o!} \ell_1'$ is first lifted by $\chi^{\Mdec_1}$ to guard $c_{\kchan} < (\Mdec_1)_{\kchan}$, reset $\{c_{\kchan}\}$, then carried into the product by the interleaving clause of \Cref{def:par-ta}. The two transitions coincide. Outputs of $\sys_2$ are symmetric, with channel index shifted by $n_1$.
    \item[Quiescence self-loops.] Let $\kchan \le n_1$ (the case $\kchan > n_1$ is symmetric). We claim $(\ell_1,\ell_2)$ is $\kchan$-quiescent in $\sys_1 \parop \sys_2$ iff $\ell_1$ is $\kchan$-quiescent in $\sys_1$. A channel-$\kchan$ output can be enabled at $(\ell_1,\ell_2)$ only via the interleaving clause from $\sys_1$ (outputs are disjoint, so $\sys_2$ has no channel-$\kchan$ output; shared inputs synchronise to inputs and not outputs, and by \Cref{def:par-lts} no output of one component is an input of the other). Hence a channel-$\kchan$ output is enabled at $(\ell_1,\ell_2)$ iff it is enabled at $\ell_1$, so the two states agree on $\kchan$-quiescence. On the left, $\chi^{\Mcat}$ therefore adds the self-loop $\langle (\ell_1,\ell_2), \delta_{\kchan}, c_{\kchan} = (\Mcat)_{\kchan}, \{c_{\kchan}\}, (\ell_1,\ell_2)\rangle$ iff $\ell_1$ is $\kchan$-quiescent. On the right, $\chidec[\Mdec_1]$ adds $\langle \ell_1, \delta_{\kchan}, c_{\kchan} = (\Mdec_1)_{\kchan}, \{c_{\kchan}\}, \ell_1\rangle$ iff $\ell_1$ is $\kchan$-quiescent, and \Cref{def:par-ta} carries this self-loop into the product as a $\delta_{\kchan}$ loop at $(\ell_1,\ell_2)$. Since it is non-shared  it interleaves and leaves $\ell_2$ fixed. Since $(\Mcat)_{\kchan} = (\Mdec_1)_{\kchan}$, the two quiescent self-loops coincide.
    \item[Invariants.] On the left, the lifting $\chi^{\Mcat}$ assigns to every location the invariant $\bigwedge_{k=1}^{n_1+n_2} c_k \le (\Mcat)_k$ where $(\Mcat)_k$ is the $k$-th position of the vector. On the right, \Cref{def:par-ta} assigns $(\ell_1,\ell_2)$ the conjunction of the component invariants, $\bigwedge_{k=1}^{n_1} c_k \le (\Mdec_1)_k \wedge \bigwedge_{k=1}^{n_2} c_{n_1+k} \le (\Mdec_2)_k$. Since $(\Mcat)_k = (\Mdec_1)_k$ for $k \le n_1$ and $(\Mcat)_{n_1+k} = (\Mdec_2)_k$ for $k \le n_2$, the two invariants are identical.
\end{description}

The two timed automata are identical because all four ingredients coincide.
\qed
\end{proof}

%--------------------------------------------------------
\lemMiocoCompose*
%--------------------------------------------------------

%MGChange: proof rewritten for \tmiocoM with explicit bound vectors.
\begin{proof}
Let $I = \sys^1_I \parop \sys^2_I$ and $S = \sys^1_S \parop \sys^2_S$. By \Cref{def:tmioco} we must show, for every $\sigma \in \TStraces[\Mcat](S)$ and every channel $\kchan \in \{1,\dots,n_1+n_2\}$, that $\outT_{\kchan}[\Mcat](I,\sigma) \subseteq \outT_{\kchan}[\Mcat](S,\sigma)$.

\emph{Projection.} For a suspension trace $\sigma$ of $S$ and $i \in \{1,2\}$, let $\sigma|_i$ be its projection onto $\Actd_{\sys^i}$. By \Cref{def:par-lts}, shared inputs advance both components and every other action advances exactly one, so $\sigma|_i \in \Straces(\sys^i_S)$, and $S \afterop \sigma$ consists of pairs $(s_1,s_2)$ with $s_i \in \sys^i_S \afterop \sigma|_i$; likewise for $I$. Moreover $\sigma|_i$ is testable \emph{for $\Mdec_i$}: a witness $\rho \in \SttracesMcus{\Mcat}(\chi^{\Mcat}(S))$ for $\sigma$ is, by \Cref{thm:compose}, a timed trace of $\chi^{\Mdec_1}(\sys^1_S) \parop \chi^{\Mdec_2}(\sys^2_S)$, and its projection $\rho|_i$ onto component $i$ (dropping the other component's actions and accumulating their delays) is a timed trace of $\chi^{\Mdec_i}(\sys^i_S)$ with $\rho|_i{\downarrow} = \sigma|_i$. Hence $\sigma|_i \in \TStraces[\Mdec_i](\sys^i_S)$, and the same argument shows that $\sigma\cdot a \in \TStraces[\Mcat](I)$ implies $\sigma|_i\cdot a \in \TStraces[\Mdec_i](\sys^i_I)$ for any $a$ owned by component $i$. Note that the bound of a channel is unchanged by concatenation, i.e. if channel $\kchan$ of the composition is owned by component $i$, then $(\Mcat)_{\kchan} = (\Mdec_i)_{\kchan'}$ for the corresponding index $\kchan'$ in that component. 

We note that the projection is well defined: component $i$'s clocks are reset only by its own outputs, by $\delta_{\kchan}$ with $\kchan$ owned by $i$, and by inputs, which are shared and hence kept. Dropped steps therefore only let time pass, so component $i$'s valuation evolves identically in $\rho$ and $\rho|_i$. Moreover the composite guards and invariant imply the component ones (they are conjunctions over a superset of channels), so every kept step remains admissible at its accumulated delay, and the component invariant holds throughout it.

\emph{Factoring.} Let channel $\kchan$ be owned by component $i$; ownership is unique since output partitions are disjoint. As in the proof of \Cref{thm:compose}, a channel-$\kchan$ output is enabled at $(s_1,s_2)$ iff it is enabled at $s_i$, and $(s_1,s_2)$ is $\kchan$-quiescent iff $s_i$ is.

\emph{Conclusion.} Fix $\sigma \in \TStraces[\Mcat](S)$, a channel $\kchan$ owned by $i$, and $a \in \outT_{\kchan}[\Mcat](I,\sigma)$, i.e.\ $\sigma\cdot a \in \TStraces[\Mcat](I)$ via some $\rho\cdot(d,a) \in \SttracesM[\Mcat](\chi^{\Mcat}(I))$. By projection, $\sigma|_i \in \TStraces[\Mdec_i](\sys^i_S)$ and $\sigma|_i \cdot a \in \TStraces[\Mdec_i](\sys^i_I)$, so $a \in \outT_{\kchan}[\Mdec_i](\sys^i_I,\sigma|_i)$. By the hypothesis $\sys^i_I \tmiocoM{\Mdec_i} \sys^i_S$, we get $a \in \outT_{\kchan}[\Mdec_i](\sys^i_S,\sigma|_i)$, in particular $\sigma|_i\cdot a \in \Straces(\sys^i_S)$, so $a$ is enabled (resp.\ $\sys^i_S$ is $\kchan$-quiescent) at $\sys^i_S \afterop \sigma|_i$, and by factoring the same holds at $S \afterop \sigma$. It remains to realize $\sigma\cdot a$ in $\chi^{\Mcat}(S)$. By uniformity, $\rho \in \SttracesM[\Mcat](\chi^{\Mcat}(S))$ (as $\sigma \in \Straces(S)$), and the step $(d,a)$ is admissible after $\rho$ in $\chi^{\Mcat}(S)$ iff $a$ is enabled at $S \afterop \sigma$ and the guard and invariant, which are the same canonic constraints as in $\chi^{\Mcat}(I)$, hold for $d$; both conditions are met. Hence $\rho\cdot(d,a) \in \SttracesM[\Mcat](\chi^{\Mcat}(S))$, so $\sigma\cdot a \in \TStraces[\Mcat](S)$ and $a \in \outT_{\kchan}[\Mcat](S,\sigma)$. As $\sigma$, $\kchan$ and $a$ were arbitrary, $I \tmiocoM{\Mcat} S$.
\qed
\end{proof}

%--------------------------------------------------------
\corComp*
%--------------------------------------------------------
\begin{proof}
By \Cref{lem:mioco-compose}, $\sys^1_I \parop \sys^2_I\: \tmiocoM{\Mcat}\: \sys^1_S \parop \sys^2_S$. Further, $\sys^1_I \parop \sys^2_I$ is an IOTS because parallel composition preserves input-enabledness (all inputs are shared and enabled in every state of both components, so every synchronised input is enabled in every product state). Then, by \Cref{cor:iots-to-iota} its lifting is an IOTA and $\mtioco{\Mcat}$ is well-typed. Applying \Cref{thm:preservation} to the composed systems with bound vector $\Mcat$ gives
$$\chi^{\Mcat}(\sys^1_I \parop \sys^2_I)\: \mtioco{\Mcat}\:
 \chi^{\Mcat}(\sys^1_S \parop \sys^2_S).$$ \qed
\end{proof}
\end{document}